\documentclass[]{statsoc}

\usepackage{color,latexsym,amsfonts}
\usepackage{amssymb}
\usepackage{amsmath}
\usepackage{bbm}
\usepackage{graphicx}
\usepackage{float}

\graphicspath{{Figures/}}


\setlength{\emergencystretch}{2em}
\setlength{\textheight}{8.2in}
\setlength{\textwidth}{5.6in}
\setlength{\topmargin}{-40pt}
\setlength{\oddsidemargin}{-40pt}
\setlength{\evensidemargin}{-40pt}
\tolerance=500

\def\blue#1{\textcolor{blue}{#1}}

\def\mH{\mathcal{H}}
\def\mI{\mathcal{I}}
\def\sX{\mathcal{X}}

\def\Re{\mathbb{R}}

\def\bA{\boldsymbol{A}}
\def\balp{\boldsymbol{\alpha}}
\def\halp{\hat{\alpha}}
\def\bB{\boldsymbol{B}}
\def\hbeta{\hat{\beta}}
\def\bc{\boldsymbol{c}}
\def\bC{\boldsymbol{C}}
\def\bd{\boldsymbol{d}}
\def\bD{\boldsymbol{D}}
\def\fs{f^{\star}}
\def\bgam{\boldsymbol{\gamma}}
\def\bLam{\boldsymbol{\Lambda}}
\newcommand{\bone}{\boldsymbol{1}}
\def\bP{\boldsymbol{P}}
\def\bpsi{\boldsymbol{\psi}}
\def\bQ{\boldsymbol{Q}}
\def\br{\boldsymbol{r}}
\def\bR{\boldsymbol{R}}
\def\bS{\boldsymbol{S}}
\def\bth{\boldsymbol{\theta}}

\newcommand{\trans}{^{\mbox{\tiny {\sf T}}}}

\def\bz{\boldsymbol{z}}
\def\bZ{\boldsymbol{Z}}

\DeclareMathOperator*{\argmin}{arg\,min}

\newtheorem{lemma}{Lemma}
\newtheorem{corollary}{Corollary}

\newtheorem{theorem}{Theorem}

\def\E{{\rm E}\,}

\usepackage{mathtools}

\def\real{\mathbb{R}}
\def\natural{\mathbb{N}}
\def\lmb{\lambda}
\DeclarePairedDelimiter\norm{\lVert}{\rVert}
\DeclarePairedDelimiter\abs{\lvert}{\rvert}
\def\iprod#1#2{\left\langle#1,#2\right\rangle}
\def\holders{H{\"o}lder's }
\def\xv{\mathrm{E}}
\def\prob#1{\mathrm{P}\left(#1\right)}
\def\Indc#1{\boldsymbol{1}\!\left[#1\right]}
\def\xval#1{\mathrm{E}\left(#1\right)}
\def\veps{\varepsilon}


\title[RKHS for Functional Classification]{A reproducing kernel Hilbert space framework for functional classification}
\author[Sang, Kashlak, Kong]{Peijun Sang}
\address{University of Waterloo,
	Waterloo, ON,
	Canada}
\email{peijun.sang@uwaterloo.ca }
\author[Sang, Kashlak, Kong]{Adam B Kashlak, Linglong Kong}
\address{University of Alberta,
	Edmonton, AB,
	Canada}

\begin{document}

\begin{abstract}{
 We encounter a bottleneck when we try to borrow the strength of classical classifiers to classify functional data. The major issue is that functional data are intrinsically infinite dimensional, thus classical classifiers cannot be applied directly or have poor performance due to curse of dimensionality. To address this concern, we propose to project functional data onto one specific direction, and then a distance-weighted discrimination DWD classifier is built upon the projection score. The projection direction is identified through minimizing an empirical risk function that contains the particular loss function in a DWD classifier,
 over a reproducing kernel Hilbert space. Hence our proposed classifier can avoid overfitting and enjoy appealing properties of DWD classifiers. This framework is further extended to accommodate functional data classification problems where scalar covariates are involved. In contrast to previous work, we establish a non-asymptotic estimation error bound on the relative misclassification rate. In finite sample case,  we demonstrate that the proposed classifiers compare favorably with some commonly used functional classifiers in terms of prediction accuracy through simulation studies and a real-world application.
}

\keywords{Functional classification; Projection; Distance-weighted discrimination, Reproducing kernel Hilbert space; Non-asymptotic error bound }
\end{abstract}

\section{Introduction}
For functional data classification, the explanatory variable is usually a random function and the outcome is a categorical random variable which can take two or more categories. As with classification problems for scalar covariates, a functional classifier is built upon a collection of observations consisting of a functional covariate and a categorical response for each subject, and then a class label will be assigned to a new functional covariate based on the classifier.  A typical example is the phoneme recognition problem in \cite{friedman2001elements}. Log-periodograms for each of the two phonemes ``aa" and ``ao" were measured at 256 frequency levels, and the primary goal is to make use of these log-periodograms to classify phonemes.  For this problem, the log-periodograms measured at 256 frequencies can be regarded as a functional covariate and the outcome takes two possible categories: ``aa" or ``ao". Therefore, this phoneme recognition can be framed as a functional classification problem. Actually, functional classification has been extensively studied in the literature due to its wide applications in various fields such as neural science, genetics, agriculture and chemometrics (\citealp{tian2010functional, leng2005classification, delaigle2012achieving, berrendero2016variable}).

As pointed out by \cite{fan2008high}, a high dimension of scalar covariates would yield a negative impact on prediction accuracy of classifiers due to curse of dimensionality. This issue is even more serious in functional classification since functional data are intrinsically infinite dimensional \citep{ferraty2004}. In light of this fact, dimension reduction was suggested before classifying functional data. Functional principal component (FPC) analysis a commonly used technique in this regard, and various classifiers for functional data have been proposed based on FPC scores, which are the projections of functional covariates onto a number of FPCs. Typical examples include discriminant analysis \citep{hall2001functional},
naive Bayes classifier  \citep{dai2017optimal} and logistic regression \citep{leng2005classification}. Since FPC analysis is an unsupervised dimension reduction approach, the retained FPC scores are not necessarily more predictive of the outcome compared with the discarded ones. In contrast, 
treating fully observed functional data as a random variable in a Hilbert space without dimension reduction has also attracted substantial attention in functional classification \citep{yao2016probability}. For instance, \cite{ferraty2003curves} proposed a distance-based classifier for functional data, and \cite{biau2005functional} and \cite{cerou2006nearest} considered $k$-nearest neighbor classification.

An optimal separating hyperplane can be constructed to distinguish two perfectly separately classes. This idea is further extended to accommodate the nonseparable case in support vector machines (SVM). More specifically, with the aid of the so-called kernel trick, the original feature space is expanded and a linear boundary in this expanded feature space can separate the two overlapping classes very well. If we project this linear boundary back onto the original feature space, it is a nonlinear decision boundary. For a more comprehensive introduction to the SVM, one can refer to \cite{vapnik2013nature} and \cite{cristianini2000introduction}. Due to the ability of constructing a flexible decision boundary, different versions of SVM for functional data have been proposed in the literature. \cite{rossi2006support} considered projecting functional covariates onto a set of fixed basis functions first, and then applied SVMs to the projections for classification. Nevertheless, \cite{yao2016probability} proposed a supervised method to perform dimension reduction for functional data. Weighted SVM \citep{lin2002support} were then constructed on the reduced feature space.
 \cite{wu2013functional} recovered trajectories of sparse functional data or longitudinal data using the principal analysis by conditional expectation (PACE) \citep{yao2005functional} first, and then proposed a support vector classifier for the random curves. But the convergence rate of the SVMs with a functional covariate was not established in the aforementioned work.

 \cite{marron2007distance} noted that the data piling problem may cause a deterioration in performance of SVM. They proposed the distance-weighted discrimination (DWD) classifier that can make uses of all observations in a training sample, rather than the support vectors in SVM, to determine the decision boundary. \cite{wang2018another} proposed an efficient algorithm to solve the DWD problem. In this article, we extend the idea of the DWD to functional data to address a binary classification problem. The basic idea is to find an optimal projection direction such that the DWD classifier built upon this projected score achieves good prediction performance. Additionally, to avoid overfitting on the training sample, we incorporate a roughness penalty term when minimizing the empirical risk function. Penalized approaches have been investigated recently in the context of functional linear regression. Interested readers can refer to the work by \cite{yuan2010reproducing} and \cite{sun2018optimal}.  However, as far as we know, this framework has received little attention in functional classification problems. The method for functional classification proposed in this article estimates the slope function through seeking a minimizer of a regularized empirical risk function over a reproducing kernel Hilbert space (RKHS). This RKHS is closely associated with the penalty term in the regularized empirical risk function. With the help of the representer theorem, we are able to convert this infinite dimensional minimization problem to a finite dimensional problem. This fact lays the foundation of numerical implementations for the proposed classifier. This framework is further extended to accommodate classifications when observations of both a functional covariate and several scalar covariates are available for each subject. There has been extensive research on partial functional linear regression models for such scenarios; see \cite{kong2016} and \cite{wong2019partially} for instance. However, we have not seen much progress made for classifications in this regard; thus our work fills the gap of functional data classification when scalar covariates are also available. 
In addition to the novel methodology, we establish a non-asymptotic ``oracle type inequality" for the bound on the convergence rate of the relative loss and the relative classification error. This error bound is essentially different from those considered in \cite{delaigle2012achieving}, \cite{dai2017optimal} and \cite{berrendero2018use}, all of which focused on asymptotic perfect classification.

The rest of this article is organized as follows. In Section \ref{sec-method} we introduce the RKHS-based functional DWD classifier for classifying functional data without and with scalar covariates. Theoretical properties of the proposed classifiers are established in Section \ref{sec-theory}. We carry out simulation studies in Section \ref{sec-simul} to investigate the finite sample performance of the proposed classifiers in terms of prediction accuracy. In Section \ref{sec-real} we consider one real world application to demonstrate the performance of the proposed classifiers. We conclude this article in Section \ref{sec-end}. 
All technical proofs are provided in the Appendix.

\section{Methodology} \label{sec-method}
Let $X$ denote a random function  with a compact domain $\mI$, and $Y$ is a binary outcome related to $X$.
Without loss of generality, we assume that $Y \in \{-1, 1\}$. Suppose that the training sample consists of $(x_i, y_i), i = 1, \ldots, n$, $n$ i.i.d. copies of $(X, Y)$. Our primary goal is to build a classifier based on this training sample.

We first present an overview of the distance-weighted discrimination (DWD) proposed by \cite{marron2007distance}. Consider the following classification problem where $\bz = (z_1, \ldots, z_p)^T \in \mathcal{Z}$ be a vector of $p$ scalar covariates and $y \in \{-1, 1\}$ is a binary response. The main task is to build a classifier: $f: \mathcal{Z} \rightarrow \{-1, 1\}$ based on $n$ pairs of observations $(\bz_i, y_i), i = 1, \ldots, n$. According to \cite{wang2018another}, the decision boundary of a generalized distance-weighted discrimination classifier
can be obtained by solving
$$
\min_{\alpha_0, \balp} n^{-1} \sum_{i = 1}^n V_q\{y_i(\alpha_0 + \bz_i^T\balp)\} + \lambda \balp^T\balp,
$$
where
\begin{equation}
V_q(u) = \begin{cases} 1 - u &\mbox{if } u \leq \frac{q}{1+q}, \\
\frac{1}{u^q}\frac{q^q}{(q+1)^{q+1}} & \mbox{if } u > \frac{q}{1+q}, \end{cases}
\label{eq-Vq}
\end{equation} 
is the loss function and $\lambda > 0$ is a tuning parameter. Note that as $q \rightarrow \infty$, the generalized DWD loss function converges to the hinge loss function. $H(u) = \max(0, 1 - u)$, used in the SVM. This relationship is also illustrated in Figure \ref{fig:loss}. Denote by $(\halp_0, \hat{\balp})$ the solution to the minimization problem above.  Given a new observation $\tilde{\bz} \in \mathcal{Z}$, the predicted class label will be 1 if $\halp_0 + \tilde{\bz}^T \hat{\balp} > 0$ and -1 otherwise.

\begin{figure}[H]
	\centering
	{\includegraphics[width=6cm]{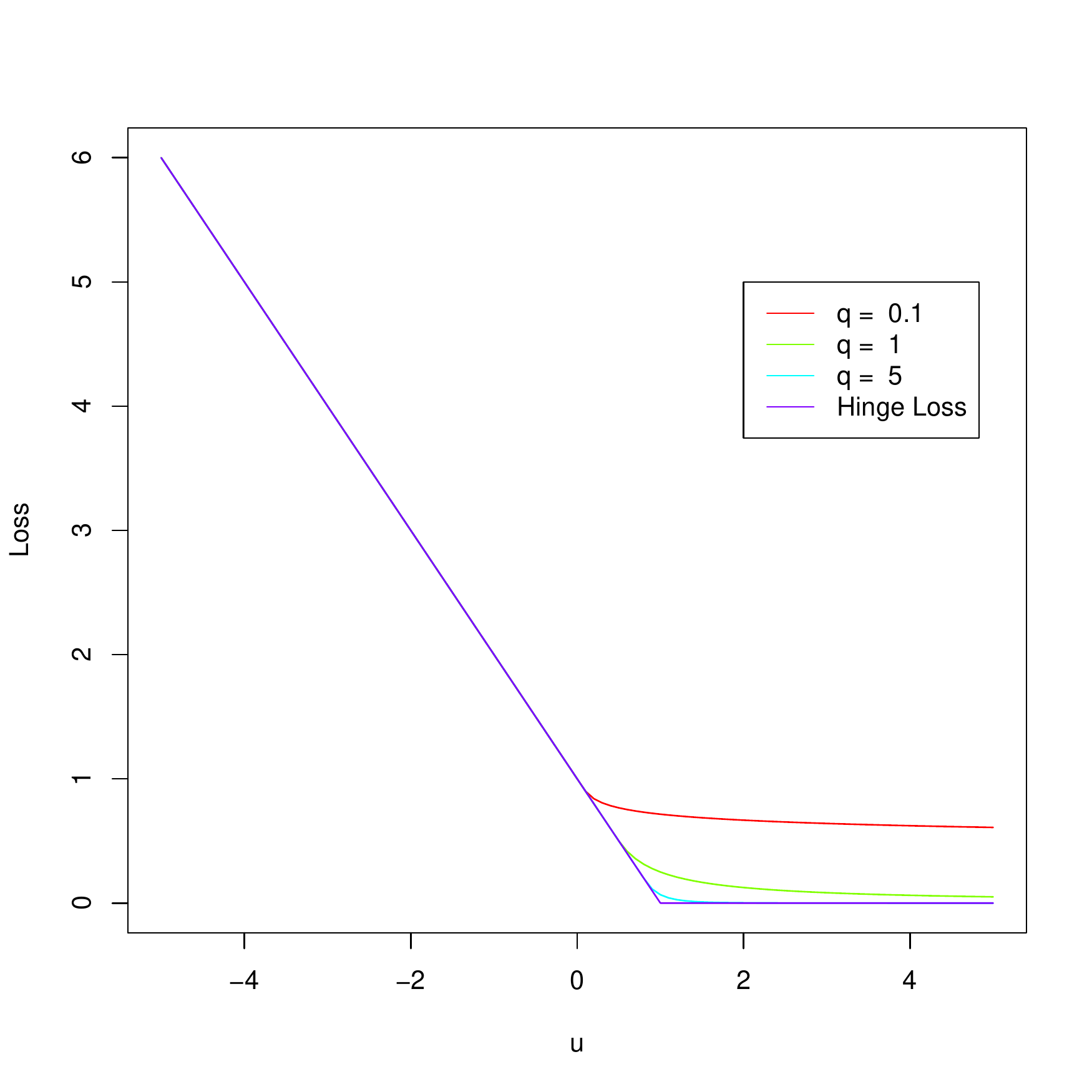}}
	\caption{The loss functions of the generalized distance-weighted discrimination with different values of $q$ and the hinge loss function.}
	\label{fig:loss}
\end{figure}

In this article, we aim to extend the framework of DWD to functional data. In particular, we consider the following objective function
\begin{equation}
Q(\alpha, \beta) = n^{-1} \sum_{i = 1}^n V_q\left\{y_i\left(\alpha + \int_{\mI} x_i(t) \beta(t) dt\right)\right\} + \lambda J(\beta),
\label{eq-target}
\end{equation}
where $J$ is a penalty functional. The penalty functional can be conveniently
defined through the slope function $\beta$ as a squared norm or semi-norm
associated with $\mH$. A canonical example of $\mH$ is the Sobolev space. Without loss of generality, assuming that $\mI = [0, 1]$, the
Sobolev space of order $m$ is then defined as
$$
\mathcal{W}_2^m([0, 1]) = \{h: [0, 1] \rightarrow \Re, h, h^{(1)},\ldots, h^{(m - 1)} \text{are absolutely continuous and}~ h^{(m)} \in L_2[0, 1] \}.
$$
Endowed with the (squared) norm
$$
||h||^2_{\mathcal{W}_2^m} = \sum_{l = 0}^{m - 1} \left(\int_0^1 h^{(l)}(t) dt\right)^2 + \int_0^1 \left(h^{(m)}(t)\right)^2dt,
$$
$\mathcal{W}_2^m([0, 1])$ is a reproducing kernel Hilbert space. In this case, a possible choice of the penalty functional is given by
\begin{equation}
J(\beta) = \int_0^1 \{\beta^{(m)}(t)\}^2 dt.
\label{eq-pen}
\end{equation}

\subsection{Representer theorem}
Let the penalty functional $J$ be a squared semi-norm on $\mH$ such that the null
space
\begin{equation}
\mH_0 = \{\beta \in \mH: J(\beta) = 0\}
\end{equation}
is a finite-dimensional linear subspace of $\mH$. Denote by $\mH_1$ its orthogonal complement in $\mH$ such that $\mH = \mH_0 \oplus \mH_1$. That is, for any $\beta \in \mH$ , there exists a unique decomposition $\beta = \beta_0 + \beta_1$ such that $\beta_0 \in \mH_0$ and $\beta_1 \in \mH_1$. Note that $\mH_1$ is also a reproducing kernel Hilbert space with the inner product of $\mH$ restricted to $\mH_1$.
Let $K: \mI \times \mI \rightarrow \Re$ be the corresponding reproducing kernel of $\mH_1$ such that $J(\beta_1) = ||\beta_1||_K^2 = ||\beta_1||_{\mH}^2$ for any $\beta_1 \in \mH_1$. Let $N = \text{dim}(\mH_0)$ and $\psi_1, \ldots, \psi_N$ be the basis functions of $\mH_0$.

We will assume that $K$ is continuous and square integrable. With slight abuse of notation, write
\begin{equation}
(Kf)(\cdot) = \int_{\mI} K(\cdot, s)f(s)ds.
\label{eq-operator}
\end{equation}
According to \cite{yuan2010reproducing}, $Kf \in \mH_1$ for any $f \in L_2(\mI)$ and $\forall \beta \in \mH_1$,
\begin{equation}
\int_{\mI} \beta(t)f(t)dt = \langle Kf, \beta \rangle_{\mH}.
\label{eq-kernel}
\end{equation}
With these observations, we are able to establish the following theorem which is crucial in numerical implementations and the theoretical analysis.

\begin{theorem}
	Let $\halp_n$ and $\hbeta_n$ be the minimizer of \eqref{eq-target} and $\hbeta_n \in \mH$. Then there exist
	$\bd = (d_1, \ldots, d_N)^T \in \Re^N$ and $\bc = (c_1, \ldots, c_n)^T \in \Re^n$ such that
	\begin{equation}
	\hbeta_n(t) = \sum_{l = 1}^N d_l\psi_l(t) + \sum_{i = 1}^n c_i(Kx_i)(t).
	\label{eq-sol}
	\end{equation}
\label{th-rep}
\end{theorem}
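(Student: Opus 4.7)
The statement is a representer theorem tailored to the regularized DWD objective, and the plan is to prove it by the standard orthogonal-decomposition argument for penalized empirical risk problems over an RKHS.

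First I would fix the role of $\alpha$ by observing that the intercept appears only inside $V_q$ and does not interact with the penalty, so it suffices to show that for each fixed $\alpha$ the minimizer in $\beta$ has the stated form; in particular this applies with $\alpha = \halp_n$. Then I would exploit the decomposition $\mH = \mH_0 \oplus \mH_1$ and write any candidate $\beta \in \mH$ as $\beta = \beta_0 + \beta_1$ with $\beta_0 \in \mH_0$ and $\beta_1 \in \mH_1$. Because $\mH_0$ is $N$-dimensional with basis $\psi_1, \ldots, \psi_N$, one automatically has $\beta_0 = \sum_{l=1}^N d_l \psi_l$ for some coefficients $d_l$. The crux is therefore to pin down the shape of $\beta_1$.

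Next I would decompose $\beta_1$ inside $\mH_1$ relative to the finite-dimensional subspace $\mathrm{span}\{Kx_1, \ldots, Kx_n\}$: write $\beta_1 = \beta_1^{\parallel} + \beta_1^{\perp}$ where $\beta_1^{\parallel} = \sum_{i=1}^n c_i\, Kx_i$ for some $\bc \in \Re^n$ and $\langle Kx_i, \beta_1^{\perp} \rangle_{\mH} = 0$ for every $i$. Applying the reproducing identity \eqref{eq-kernel} gives $\int_{\mI} x_i(t) \beta_1^{\perp}(t)\,dt = \langle Kx_i, \beta_1^{\perp}\rangle_{\mH} = 0$, so $\int x_i \beta = \int x_i \beta_0 + \int x_i \beta_1^{\parallel}$ for each $i$. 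Consequently the data-fit term $n^{-1}\sum_i V_q\{y_i(\alpha + \int x_i \beta)\}$ is independent of $\beta_1^{\perp}$.

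For the penalty, I would use $J(\beta) = \|\beta_1\|_{\mH}^2$ together with the Pythagorean identity on $\mH_1$ to obtain $J(\beta) = \|\beta_1^{\parallel}\|_{\mH}^2 + \|\beta_1^{\perp}\|_{\mH}^2 \ge \|\beta_1^{\parallel}\|_{\mH}^2$, with equality iff $\beta_1^{\perp} = 0$. Since $\lambda > 0$ and the data term is unaffected by $\beta_1^{\perp}$, any minimizer must have $\beta_1^{\perp} = 0$, giving the representation in \eqref{eq-sol}.

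I do not anticipate any serious obstacle: the key step is verifying that $\beta_1^{\perp}$ contributes nothing to the fit, which follows cleanly from \eqref{eq-kernel}. The only mild subtlety is making sure the decomposition $\beta_1 = \beta_1^{\parallel} + \beta_1^{\perp}$ is well posed, which is immediate because $\mathrm{span}\{Kx_1,\ldots,Kx_n\}$ is a closed (finite-dimensional) subspace of the Hilbert space $\mH_1$ and therefore admits an orthogonal projection. The convexity or uniqueness of the minimizer is not needed; the argument shows that for any minimizer $\hbeta_n$, it can be written in the form \eqref{eq-sol}.
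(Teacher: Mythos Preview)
Your proposal is correct and follows essentially the same orthogonal-decomposition argument as the paper: decompose $\hbeta_n$ into its $\mH_0$ part, its component in $\mathrm{span}\{Kx_i\}\subset\mH_1$, and an $\mH_1$-orthogonal remainder $\rho$; use the identity \eqref{eq-kernel} to show $\rho$ does not affect the data-fit term, and then conclude $\rho=0$ since $J(\beta)$ is strictly decreased by dropping $\rho$. Your write-up is in fact a bit more explicit than the paper's (you spell out the Pythagorean step and the closedness of the finite-dimensional span), but the route is the same.
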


\subsection{Estimation algorithm}
\label{sec-est}

For the purpose of illustration, we assume that $\mH = \mathcal{W}_2^2$ and $J(\beta) = \int_0^1 (\beta^{''})^2 dt$, in the following numerical implementations. Then $\mH_0$ is the linear space spanned by $\psi_1(t) = 1$ and $\psi_2(t) = t - 0.5$. A possible choice for the reproducing kernel associated with $\mH_1$ is
$$
K(s, t) = k_2(s)k_2(t) - k_4(|s - t|),
$$
where $k_2(s) = \frac{1}{2}\left(\psi_2^2(s) - \frac{1}{12}\right)$ and $k_4(s) =  \frac{1}{24}\left(\psi_2^4(s) - \frac{\psi_2^2(s)}{2} +  \frac{7}{240}\right)$. You can refer to Chapter 2.3 of \cite{gu2013smoothing} for more details. Based on Theorem \ref{th-rep}, we only need to consider $\beta$ that takes the following form:
$$
\beta(t) = d_1 + d_2 (t - 0.5) + \sum_{i = 1}^n c_i \int_{\mI} x_i(s) K(t, s)ds
$$
for some $\bd = (d_1, d_2) \trans \in \Re \trans$ and $\bc = (c_1, \ldots, c_n)\trans \in \Re^n$ to minimize the function in \eqref{eq-target}. As a result,
\begin{align*}
\int_{\mI} x_i(t) \beta(t) dt = ~& d_1\int_{\mI}x_i(t)dt + d_2\int_{\mI}x_i(t)(t - 0.5)dt  \\
& + \sum_{j = 1}^n  c_j \int_{\mI} \int_{\mI} x_i(t) K(t, s) x_j(s)dsdt.
\end{align*}
For the penalty term, we have $J(\beta) = \bc\trans \bR \bc$, where $\bR$ is an $n \times n$ matrix with $(i, j)$th entry $r_{ij} = \int_{\mI} \int_{\mI} x_i(t) K(t, s) x_j(s)dsdt$. Denote by $\bS$ an $n \times 2$ matrix with the $(i, j)$th entry
$s_{ij} =  \int_{\mI}x_i(t) \psi_j(t) dt$ for $j = 1, 2$. Let $\bS_i$ and $\bR_i$ denote the $i$th row of $\bS$ and $\bR$, respectively. Now the infinite-dimensional minimization of \eqref{eq-target} becomes the following finite dimensional minimization problem:
\begin{equation}
D(\alpha, \bd, \bc) := n^{-1} \sum_{i = 1}^n V_q\{y_i(\alpha + \bS_i\trans \bd + \bR_i \trans \bc)\} + \lambda \bc \trans \bR \bc.
\label{eq-finite}
\end{equation}

To find the minimizer of $D$, we implement the majorization-minimization (MM) principal. The basic idea is as follows. We firstly look for a majorization function $M(\bth | \bth^{\prime})$, where $\bth = (\alpha, \bd^T, \bc^{T})^{T}$ in this problem, of the target function $D(\bth)$. This majorization function satisfies $D(\bth) < M(\bth | \bth^{\prime})$ for any $\bth \neq \bth^{\prime}$ and $D(\bth) = M(\bth | \bth^{\prime})$ if $\bth = \bth^{\prime}$. Additionally, it should be easy to find the minimizer of $M(\bth | \bth^{\prime})$ for any given $\bth^{\prime}$. Then given an initial value of $\bth$, say $\bth^{(0)}$, we are able to generate a sequence of $\bth$'s, say $\{\bth^{(k)}\}_{k = 1}^{\infty}$, which are defined by $\bth^{(k + 1)} = \argmin_{\bth} M(\bth | \bth^{(k)})$, $k \geq 0$. As long as this sequence converges, the limit is regarded as the minimizer of the objective function $D$.

Given $\bar{\bth} = (\bar{\alpha}, \bar{\bd}\trans, \bar{\bc}\trans)\trans$, let $\br = (r_1, \ldots, r_n)\trans$ with $r_i = y_iV_q^{\prime}\{y_i(\bar{\alpha} + \bS_i\trans\bar{\bd} + \bR_i\trans\bar{\bc})\}/n$ and $$
\bA_{q,\lambda} = \begin{pmatrix}
n  &  \bone_n\trans\bS &  \bone_n\trans\bR\\
\bS\trans\bone_n & \bS\trans\bS & \bS\trans\bR \\
\bR\bone_n & \bR\bS  & \bR\bR + \frac{2nq\lambda}{(q+1)^2}\bR
\end{pmatrix},
$$
where $\bone_n$ denotes a vector of length $n$ with each component equal to 1. 
According to Lemma 2 of \cite{wang2018another}, we can take the majorization function of $D$ as
\begin{align*}
M(\bth | \bar{\bth}) & = \frac{1}{n}\sum_{i = 1}^n V_q\{y_i(\bar{\alpha} + \bS_i\trans\bar{\bd} + \bR_i\trans\bar{\bc})\} + \lambda \bar{\bc}\trans\bR\bar{\bc} \\
& ~~~ + \begin{pmatrix}
\bone\trans\br \\
\bS\trans\br \\
\bR\br + 2\lambda\bR\bar{\bc}
\end{pmatrix}\trans \begin{pmatrix}
\alpha - \bar{\alpha} \\
\bd - \bar{\bd} \\
\bc - \bar{\bc}
\end{pmatrix} + \frac{(q+1)^2}{2nq}\begin{pmatrix}
\alpha - \bar{\alpha} \\
\bd - \bar{\bd} \\
\bd - \bar{\bc}
\end{pmatrix}^{\top} \bA_{q,\lambda} \begin{pmatrix}
\alpha - \bar{\alpha} \\
\bd - \bar{\bd} \\
\bc - \bar{\bc}
\end{pmatrix}.
\end{align*}
It is trivial to show that the minimizer of $M(\bth | \bar{\bth})$ is
$$
\begin{pmatrix}
\bar{\alpha} \\
\bar{\bd} \\
\bar{\bc}
\end{pmatrix} - \frac{nq}{(q+1)^2}\bA_{q,\lambda}^{-1}\begin{pmatrix}
\bone_n\trans\br \\
\bS\trans\br \\
\bR\br + 2\lambda\bR\bar{\bc}
\end{pmatrix}
$$
Then the algorithm proceeds until the sequence of minimizers converges. The limit of this sequence is denoted by $(\halp, \hat{\bd}, \hat{\bc})\trans$, and thus $\hat{\beta}(t) = \hat{d}_1 + \hat{d}_2(t - 0.5) + \sum_{i = 1}^n \hat{c}_i (Kx_i)(t)$. The functional DWD classifier assigns  1 or -1 to a new functional observation $x$ according to whether the statistic $\halp + \int_{\mI} x(t)\hat{\beta}(t)dt$ is positive or negative.

\subsection{Functional DWD with scalar covariates}
The algorithm presented above is to address binary classification problems for univariate functional data. This idea can be further extended to accommodate binary classification when both a functional covariate and finite dimensional scalar covariates are involved. In particular, the training sample consists of $(x_i, y_i, \bz_i), i = 1, \ldots, n$, where $\bz_i = (z_{i1}, \ldots, z_{ip})\trans$ denotes the $p$ dimensional scalar covariates of the $i$th subject. With slight abuse of notation, we consider the following extension of \eqref{eq-target}: 
\begin{equation}
	Q(\alpha, \beta, \bgam) = n^{-1} \sum_{i = 1}^n V_q\left\{y_i\left(\alpha + \int_{\mI} x_i(t) \beta(t) dt + \bz_i\trans \bgam\right)\right\} + \lambda J(\beta),
	\label{eq-target2}
\end{equation}
to build a partial linear DWD classifier.

To solve the minimization problem of \eqref{eq-target2}, we resort to the specific representation of $\hat{\beta}$ in Theorem \ref{th-rep}. Actually it is straightforward to verify that this result still holds in the context of \eqref{eq-target2}. As a result, the infinite dimensional minimization problem of \eqref{eq-target2} is converted to the following finite one:
\begin{equation}
D(\alpha, \bd, \bc, \bgam) := n^{-1} \sum_{i = 1}^n V_q\{y_i(\alpha + \bS_i\trans \bd + \bR_i\trans \bc + \bz_i\trans \bgam)\} + \lambda \bc\trans \bR \bc.
\label{eq-finite2}
\end{equation}
With some modifications, we employ the MM principal to address the minimization problem above. In particular, the majority function of $D$ is taken as
\begin{align*}
M(\bth | \bar{\bth}) & = \frac{1}{n}\sum_{i = 1}^n V_q\{y_i(\bar{\alpha} + \bS_i\trans\bar{\bd} + \bR_i\trans\bar{\bc} + \bz_i\trans\bar{\bgam})\} + \lambda \bar{\bc}\trans\bR\bar{\bc} \\
& ~~~ + \begin{pmatrix}
\bone_n\trans\br \\
\bS\trans\br \\
\bZ\trans\br \\
\bR\br + 2\lambda\bR\bar{\bc} 
\end{pmatrix}\trans \begin{pmatrix}
\alpha - \bar{\alpha} \\
\bd - \bar{\bd} \\
 \bgam - \bar{\bgam} \\
 \bc - \bar{\bc}
\end{pmatrix} + \frac{(q+1)^2}{2nq}\begin{pmatrix}
\alpha - \bar{\alpha} \\
\bd - \bar{\bd} \\
\bgam - \bar{\bgam} \\
\bc - \bar{\bc}
\end{pmatrix}\trans \bA_{q,\lambda} \begin{pmatrix}
\alpha - \bar{\alpha} \\
\bd - \bar{\bd} \\
\bgam - \bar{\bgam} \\
\bc - \bar{\bc}
\end{pmatrix}, 
\end{align*}
where $\br = (r_1, \ldots, r_n)\trans$ with $r_i = y_iV_q^{\prime}\{y_i(\bar{\alpha} + \bS_i\trans\bar{\bd} + \bR_i\trans\bar{\bc} + \bz_i\trans\bar{\bgam})\}/n$, $\bZ = (\bz_1, \ldots, \bz_n) \trans\in \Re^{n \times p}$ and $\bar{\bth} = (\bar{\alpha}, \bar{\bd}\trans, \bar{\bc}\trans, \bar{\bgam}\trans)\trans$, and $$
\bA_{q,\lambda} = \begin{pmatrix}
n  &  \bone_n\trans\bS &  \bone_n\trans\bZ & \bone_n\trans\bR\\
\bS\trans\bone_n & \bS\trans\bS & \bS\trans\bZ & \bS\trans\bR\\
\bZ\trans\bone_n & \bZ\trans\bS & \bZ\trans\bZ & \bZ\trans\bR  \\
\bR\bone_n & \bR\bS  & \bR\bZ & \bR\bR + \frac{2nq\lambda}{(q+1)^2}\bR 
\end{pmatrix}.
$$
Thus the minimizer of $M(\bth | \bar{\bth})$ is given by
$$
\begin{pmatrix}
\bar{\alpha} \\
\bar{\bd} \\
\bar{\bgam} \\
\bar{\bc}
\end{pmatrix} - \frac{nq}{(q+1)^2}\bA_{q,\lambda}^{-1}\begin{pmatrix}
\bone_n\trans\br \\
\bS\trans\br \\
\bZ\trans\br \\
\bR\br + 2\lambda\bR\bar{\bc}.
\end{pmatrix}
$$
We then follow steps in Section \ref{sec-est} to implement classifications on subjects with both a functional covariate and several scalar covariates.

\subsection{Tuning parameter selection}
Here we focus on binary classification when both a functional covariate and several scalar covariates are involved. 
The prediction performance of the proposed classifier depends on the choice of the two tuning parameters $q$ and $\lambda$.
Computing the inverse of the matrix $\bA_{q,\lambda} \in \Re^{(n+3 +p) \times (n + 3 + p)}$ for every combination of $q$ and $\lambda$ would be computationally intensive, especially when sample size is large. 
Instead, we come up with a solution with a lower computational cost to tackle this problem. 
To facilitate the selection procedure, the essential idea in our implementations is to avoid directly computing the inverse matrix of $\bA_{q, \lambda}$ for each combination of $(q, \lambda)$ since it would require substantial time. Write $\bA_{q, \lambda}$ as $$
\bA_{q, \lambda} = \left[
\begin{array}{ccc|c}
n  &  \bone_n\trans\bS &  \bone_n\trans\bZ & \bone_n\trans\bR\\
\bS\trans\bone_n & \bS\trans\bS & \bS\trans\bZ & \bS\trans\bR \\ 
\bZ\trans\bone_n & \bZ\trans\bS & \bZ\trans\bZ & \bZ\trans\bR \\ 
\hline
\bR\bone_n & \bR\bS  & \bR\bZ &  \bR\bR + \frac{2nq\lambda}{(q+1)^2}\bR
\end{array}
\right] := \begin{pmatrix}
	\bB & \bC^T \\
	\bC & \bD
\end{pmatrix}.$$
Therefore, the inverse of $\bA_{q,\lambda}$ admits
\begin{equation}
\bA_{q, \lambda}^{-1} = \begin{pmatrix}
\bB^{-1} + \bB^{-1}\bC\trans(\bD - \bC\bB^{-1}\bC\trans)^{-1}\bC\bB^{-1} & -\bB^{-1}\bC\trans(\bD - \bC\bB^{-1}\bC\trans)^{-1} \\
-(\bD - \bC\bB^{-1}\bC\trans)^{-1}\bC\bB^{-1}& (\bD - \bC\bB^{-1}\bC\trans)^{-1}
\end{pmatrix}
\label{eq-inverseA}
\end{equation}
Note that among these matrices only $D$ depends on $q$ and $\lambda$. The inverse of $\bD - \bC\bB^{-1}\bC^T$ is available from the Sherman-Morrison-Woodbury formula:
\begin{equation}
(\bD - \bC\bB^{-1}\bC^T)^{-1} = \bD^{-1} + \bD^{-1}\bC(B - \bC^T\bD^{-1}\bC)^{-1}\bC^T\bD^{-1}.
\label{eq-inverseD}
\end{equation}
To compute the matrix in \eqref{eq-inverseD}, we need to find the inverse of $\bD$ first.
Let $\bQ\bLam\bQ^{T}$ denote the eigen-decomposition of $\bR$; hence it does not depend on $\lambda$. Then we compute the inverse of $\Pi_{q, \lambda} = \bLam\bLam + \{2nq\lambda/(q+1)^2\}\bLam$ for each $q$ and $\lambda$; it is actually a diagonal matrix. Hence the inverse of $\Pi_{q, \lambda}$ is immediately available for each combination of $q$ and $\lambda$.
Furthermore, $\bD^{-1} = \bQ\Pi_{q, \lambda}^{-1}\bQ^T$ and note that $B - \bC^T\bD^{-1}\bC$ is a $(3 + p) \times (3 + p)$ matrix.
These suggest that it is efficient to compute the (inverse) matrix in \eqref{eq-inverseD}, as long as $p$ is relatively small.

Finally, we employ the expression of $\bA_{q, \lambda}$ in \eqref{eq-inverseA} to compute $\bA_{q, \lambda}^{-1}\begin{pmatrix}
(\bone_n, \bS, \bZ)\trans\br \\
\bR\br + 2\lambda\bR\bar{\bc}
\end{pmatrix}
$
directly. Denote by $\bP$ the inverse of $\bD - \bC\bB^{-1}\bC\trans$. 
By equation \eqref{eq-inverseA}, we have
$$
\bA_{q, \lambda}^{-1}\begin{pmatrix}
(\bone_n, \bS, \bZ)\trans\br \\
\bR\br + 2\lambda\bR\bar{\bc}
\end{pmatrix} = \begin{pmatrix}
\bB^{-1} + \bB^{-1}\bC\trans\bP\bC\bB^{-1} &  -\bB^{-1}\bC\trans\bP\\
-\bP\bC\bB^{-1}& \bP
\end{pmatrix} \begin{pmatrix}
(\bone_n, \bS, \bZ)\trans\br \\
\bR\br + 2\lambda\bR\bar{\bc}
\end{pmatrix}.
$$
With the procedures above, we are able to compute the minimizer of the majorization function $M(\bth | \bar{\bth})$  for different values of $q$ and $\lambda$, and thus solve the minimization problem of \eqref{eq-finite2} efficiently. We employ cross validation to choose the optimal combination of $q$ and $\lambda$ in the following numerical studies.

\section{Theoretical properties} \label{sec-theory}

Let $\fs$ denote the Bayes classifier, which can minimize the probability of misclassification, $P(\fs(X) \neq Y)$.
It is trivial that $\fs(X) = 2\mathbbm{1}\{P(Y = 1 | X = x) > 0.5\} - 1$ a.s. on the set $\{P(Y = 1 | X = x) \neq 0.5\}$.
Given a loss function $\ell$, the associated risk function for a classifier  $f$ is then defined by
$$
L(f, \fs) := \E[l(f) - l(\fs)].
$$
\cite{blanchard2008statistical} established a non-asymptotic bound on $L(\hat{f}, \fs)$ when $\hat{f}$ is the estimated support vector classifier
from a training sample in which each subject consists of multiple scalar covariates and a binary outcome, and $l$ is the corresponding hinge loss function.

Denote by $\hat{f}(x) = \halp + \int_{\mI} x(t)\hat{\beta}(t)dt$ the estimated functional DWD classifier, where $\halp$ and $\hat{\beta}$ minimize the target function $Q$ in \eqref{eq-target}. In the context of functional classification, assuming that the functional covariate $X \in \sX$, the Bayes classifier $\fs$ is a measurable functional from $\sX$ to $\{-1, 1\}$.
Here we aim to establish a non-asymptotic bound on $L(\hat{f}, \fs)$, where $l = V_q$ is the loss function for the functional DWD classifier.

The following conditions are required where (C1) is with regard
to bounds on the noise and (C2) is with regard to bounds on the
kernel and functions.
\begin{enumerate}
  \item[(C1)]
  For $\eta(x) = \prob{Y=1|X=x}$, we require
  $\abs{\eta(x)-1/2-c_q/4}\ge \eta_0>0$ for all $x$ and $q\ge q_0$, 
  thus $q$ large enough so that $c_q < 4 - 2/\eta_0$.
  We also require
  $\min\{\eta(x),1-\eta(x)\}\ge\eta_1>0$ for all $x$, which
  bounds the probabilty away from 0 and 1 and also 
  implies that $\eta_1<1/2$.
  \item[(C2)]
  There exist positive constants $M,A$ such that 
  $\norm{x}_{L^2}\le A$ for $x(t)\in L^2(I)$ and
  $\norm{k}_{L^2}\le M$ for $k$ being 
  the reproducting kernel.
\end{enumerate}

There are two settings to consider similar to \cite{blanchard2008statistical},
which affect how the penalization parameter is controlled.
In setting (S1), the risk is considered via the spectral
properties of the reproducing kernel.  Specifically, the 
penalization parameter $\lmb_n$ is controlled by the
tail sum of the eigenvalues of the reproducing kernel.
In setting (S2), the risk is considered via covering numbers
under the sup-norm.  In contrast, $\lmb_n$ is controlled
via $H_\infty$, the supremum norm $\veps$-entropy.
This control is encapsulated in the $\gamma(n)$ term
defined in the following theorem.

\begin{theorem}
    \label{thm:empRisk}
	Under conditions (C1) and (C2),
	let the penalization parameter $\lambda_n$
	be bounded as 
	$$
	\lambda_n \ge C\left( \gamma(n) + 
	\frac{\log(\delta^{-1}\log n)\vee 1}{n} \right)
	$$
	for some universal constant $C$ with $\gamma(n)=\frac{1}{\sqrt{n}}\inf_{d\in\natural}[
	\frac{2d}{\sqrt{n}} + \frac{A\eta_1}{M}\sqrt{\sum_{j>d}\lmb_j} 
	]$ under (S1) and $\gamma(n)=(x^*)^2/M^2$ under (S2)
	where $x^*$ is the solution to 
	$
	  \int_{0}^{x}
	  \sqrt{ H_\infty(B,\veps) }d\veps
	  = \sqrt{n}x^2/AM
	$ with $B = \{ \beta\in\mathcal{W}_2^m(I)\,:\, \norm{\beta}\le 1\}$.
	For an iid sample of functional-binary pairs 
	$(x_i(t),y_i)$ and FDWD loss function $V_q(\cdot)$,
	let the regularized estimator $\hat{\beta}$ be the 
	solution to 
	$$
	\hat{\beta} = \arg\min_{\beta} n^{-1} 
	\sum_{i = 1}^n V_q\left\{y_i\int_{\mI} x_i(t) 
	\beta(t) dt\right\} + \lambda AMJ(\beta)
	$$ 
	for corresponding classifier $\hat{f}$.
	Then, for $f^*$ being the Bayes classifier
	and $f_\beta$ the classifier corresponding to 
	any arbitrary $\beta$, 
	the following holds with probabiliy at least $1-\delta$,
	$$
	L(\hat{f},f^*) \le
	2\inf_{\beta\in \mathcal{W}_2^m}\left\{
	L(f_\beta,f^*) + 2\lmb_n AMJ(\beta)
	+ \lmb_nk_1 + k_0 )
	\right\}
	$$
	for positive constants $k_0,k_1$.
\end{theorem}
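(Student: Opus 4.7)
The plan is to follow the template for SVM oracle inequalities in \cite{blanchard2008statistical}, adapted to the FDWD loss $V_q$ and to a functional covariate. First I would catalogue the properties of $V_q$ needed for empirical process theory: $V_q$ is convex, nonnegative, and Lipschitz with constant $1$ (its derivative equals $-1$ on $(-\infty,q/(q+1)]$ and lies in $[-1,0)$ beyond); condition (C2) together with Cauchy--Schwarz and the RKHS reproducing property gives $|\int_{\mI} x_i(t)\beta(t)\,dt|\le AM\|\beta\|_\mH$, so the loss is bounded on any ball of $\mH$. Condition (C1), which bounds $\eta(x)$ away from $1/2 + c_q/4$ and from $0,1$, is the functional analogue of the Tsybakov margin condition used in \cite{blanchard2008statistical}; it simultaneously certifies that $V_q$ is classification-calibrated (so that excess $V_q$-risk upper bounds excess 0-1 risk) and enables a Bernstein-type variance bound $\Var(V_q\circ f_\beta - V_q\circ f^*)\lesssim L(f_\beta,f^*)$ that drives fast rates.

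Second, I would perform the standard oracle decomposition. For an arbitrary reference $\beta\in\mathcal{W}_2^m$, adding and subtracting empirical risks gives
$$L(\hat f,f^*) - L(f_\beta,f^*) = (P-P_n)(V_q\circ \hat f - V_q\circ f_\beta) + P_n(V_q\circ \hat f - V_q\circ f_\beta).$$
The optimality of $\hat\beta$ implies $P_n(V_q\circ\hat f) + \lmb_n AM J(\hat\beta) \le P_n(V_q\circ f_\beta) + \lmb_n AM J(\beta)$, so the second summand is at most $\lmb_n AM (J(\beta) - J(\hat\beta))$. The problem then reduces to uniformly controlling the empirical process $(P - P_n)(V_q\circ \hat f - V_q\circ f_\beta)$ over the effective class of candidate $\beta$'s, which will generate the $2\lmb_n AM J(\beta)$ term on the right-hand side and allow the additive penalty on $J(\hat\beta)$ to be absorbed on the left.

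Third, I would control that empirical process by localization. For each level $r>0$ define the sliced class $\mathcal{F}_r = \{V_q\circ f_\beta - V_q\circ f^* : L(f_\beta,f^*) + \lmb_n AM J(\beta)\le r\}$, and bound its expected Rademacher complexity by $\gamma(n)$. Under (S1) this follows from a Mendelson-type bound using the spectral decomposition of the integral operator associated with $K$: the $d$-truncation contributes $2d/\sqrt n$ and the tail $\sqrt{\sum_{j>d}\lmb_j}$ gets rescaled by $A\eta_1/M$ from the bounds in (C2) and the variance control from (C1). Under (S2) I would apply Dudley's entropy integral with $H_\infty$, for which the fixed-point $x^*$ is precisely the balance between the entropy integral and $\sqrt n x^2/AM$. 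Talagrand's inequality together with the Bernstein variance bound and a dyadic peeling over $r$ then yield, with probability at least $1-\delta$, a uniform inequality
$$(P-P_n)(V_q\circ f_\beta - V_q\circ f^*) \le \tfrac12 L(f_\beta,f^*) + C\Bigl(\gamma(n) + \tfrac{\log(\delta^{-1}\log n)\vee 1}{n}\Bigr),$$
where the $\log\log n$ factor records the peeling and the one-half comes from Cauchy--Schwarz against the variance bound.

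Plugging this into the oracle decomposition, using the lower bound on $\lmb_n$ to dominate the complexity remainder, and rearranging to absorb the $\tfrac12 L(\hat f,f^*)$ term into the left-hand side yields the claimed bound with leading constant $2$ and remainder $\lmb_n k_1 + k_0$. The main obstacle I expect is the Bernstein variance inequality for the excess FDWD loss under (C1): for hinge loss this is the classical computation of \cite{blanchard2008statistical}, but for $V_q$ the relationship between $\Var(V_q\circ f_\beta - V_q\circ f^*)$ and $L(f_\beta,f^*)$ depends nontrivially on $q$, on where $c_q$ sits relative to $1/2$, and on the distance between $\eta(x)$ and the FDWD-calibrated threshold --- which is precisely what (C1) is engineered to control. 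Tracking constants so that they remain bounded uniformly in $q\ge q_0$ (and recover the SVM rates as $q\to\infty$) is the technical heart of the argument.
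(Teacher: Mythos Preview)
Your proposal is correct in spirit and identifies the right ingredients: the Lipschitz property of $V_q$, the bound $|f_\beta|\le AM\|\beta\|_{\mH}$ from (C2), the Bernstein-type variance inequality driven by (C1), the spectral/entropy dichotomy for (S1)/(S2), and a localization argument with Talagrand's inequality and peeling. You are also right that the delicate step is establishing $\Var(V_q\circ f_\beta - V_q\circ f^*)\lesssim L(f_\beta,f^*)$ for the FDWD loss; the paper proves exactly this by a direct case analysis on the sign and magnitude of $f$, using the constraint $c_q<4-2/\eta_0$ from (C1).

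The main packaging difference is that the paper does not run Talagrand plus peeling over the risk-level slices $\mathcal{F}_r$ you define. Instead it casts the problem as model selection over a \emph{discrete} family of balls $B(R)$, $R\in\{2^k/(AM):k\le\lceil\log_2 n\rceil\}$, verifies the hypotheses of the model-selection theorem of \cite{blanchard2008statistical} (itself a generalization of \cite{massart2000some}) on each ball, and invokes that theorem as a black box. The localization there is over $d^2(f,f_0)=\E[(f-f_0)^2]$ under (S1) but over $d^2(f,f_0)=\E[(V_q(f)-V_q(f_0))^2]$ under (S2), a distinction your sketch elides. The paper also uses the a priori bound $J(\hat\beta)\le n/(AM)$ (from comparing $\hat\beta$ to $\beta=0$) to cap the dyadic grid, which is what makes the union over models finite and produces the $\log\log n$ factor; your peeling over $r$ would need an analogous truncation. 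Both routes arrive at the same place, but the model-selection framework handles the interaction between the penalty $\lambda_n AMJ(\beta)$ and the complexity of the ball more cleanly than slicing simultaneously on risk and penalty as you propose.
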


Proof of Theorem~\ref{thm:empRisk} can be found in the appendix.
We extend this theorem to the functional DWD estimator with 
scalar covariates in the following corollary.
For this extension, we require an additional condition:
\begin{enumerate}
  \item[(C3)]
  There exist a positive constant $\Pi$ such that 
  $\norm{z}_{\ell^2}\le \Pi$ for $z\in \real^p$.
\end{enumerate}
The proof of the below corollary follows from that for 
Theorem~\ref{thm:empRisk}.  Instead of considering 
suprema over the ball 
$B(R) = \{ \beta\in\mathcal{W}_2^m(I)\,:\, \norm{\beta}\le R\}$,
we instead consider the product ball
$\mathbb{B}(R) = B(R) \times \{ \gamma\in\real^p\,:\, \norm{\gamma}\le R\}$.

\begin{corollary}
	\label{cor:empRisk}
	Under conditions (C1), (C2), and (C3),
	let $A^* = A + \Pi/M$ be a positive constant, and
	let the penalization parameter $\lambda_n$
	be bounded as 
	$$
	\lambda_n \ge C\left( \gamma(n) + 
	\frac{\log(\delta^{-1}\log n)\vee 1}{n} \right)
	$$
	for some universal constant $C$ with $\gamma(n)=\frac{1}{\sqrt{n}}\inf_{d\in\natural}[
	\frac{2d}{\sqrt{n}} + \frac{A^*\eta_1}{M}\sqrt{\sum_{j>d}\lmb_j} 
	]$ under (S1) and $\gamma(n)=(x^*)^2/M^2$ under (S2)
	where $x^*$ is the solution to 
	$
	\int_{0}^{x}
	\sqrt{ H_\infty(\mathbb{B},\veps) }d\veps
	= \sqrt{n}x^2/A^*M
	$ with $\mathbb{B} = 
	\{ \beta\in\mathcal{W}_2^m(I)\,:\, \norm{\beta}\le 1\}\times
	\{ \gamma\in\real^p\,:\, \norm{\gamma}\le 1\}$.
	For an iid sample of functional-covariate-binary triples 
	$(x_i(t),z_i,y_i)$ and FDWD loss function $V_q(\cdot)$,
	let the regularized estimator $\hat{\beta}$ and $\hat{\gamma}$ 
	be the solution to 
	$$
	(\hat{\beta},\hat{\gamma}) = \arg\min_{\beta,\gamma} n^{-1} 
	\sum_{i = 1}^n V_q\left\{y_i\int_{\mI} x_i(t) 
	\beta(t) dt + z\trans\gamma \right\} + \lambda A^*MJ(\beta)
	$$ 
	for corresponding classifier $\hat{f}$.
	Then, for $f^*$ being the Bayes classifier
	and $f_{\beta,\gamma}$ the classifier corresponding to 
	any arbitrary $\beta$ and $\gamma$, 
	the following holds with probabiliy at least $1-\delta$,
	$$
	L(\hat{f},f^*) \le
	2\inf_{\beta\in \mathcal{W}_2^m,\gamma\in\real^p}\left\{
	L(f_{\beta,\gamma},f^*) + 2\lmb_n A^*MJ(\beta)
	+ \lmb_nk_1 + k_0 )
	\right\}
	$$
	for positive constants $k_0,k_1$.
\end{corollary}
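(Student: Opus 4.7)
\textbf{Proof proposal for Corollary~\ref{cor:empRisk}.}
The plan is to run the proof of Theorem~\ref{thm:empRisk} verbatim after making two substitutions: replace $\beta$ by the pair $(\beta,\gamma)$, viewed as an element of the product Hilbert space $\mathcal{W}_2^m(I)\times\real^p$, and replace the RKHS data--Lipschitz constant $AM$ by $A^*M$. The linear predictor $f_{\beta,\gamma}(x,z) = \int_{\mI} x(t)\beta(t)\,dt + z\trans\gamma$ remains linear in the parameter, and a single Cauchy--Schwarz bound
$$
\abs{f_{\beta,\gamma}(x,z)} \le \norm{Kx}_{\mathcal{H}}\norm{\beta}_{\mathcal{H}} + \norm{z}_{\ell^2}\norm{\gamma}_{\ell^2}
\le \bigl(AM \cdot \norm{\beta}_{\mathcal{H}} + \Pi\cdot\norm{\gamma}_{\ell^2}\bigr)
$$
(using (C2) and (C3)) yields the uniform bound $A^*M\cdot R$ on $\mathbb{B}(R)$, which is exactly the role played by $AM\cdot R$ on $B(R)$ in Theorem~\ref{thm:empRisk}. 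This is why the constant $A^*=A+\Pi/M$ is the correct replacement for $A$ throughout the statement.

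With this constant in hand, I would repeat the decomposition of the excess $V_q$-risk into an approximation term and an estimation term. The approximation term is handled exactly as before since it only uses the variational characterization of the infimum on the right-hand side, with $\beta$ replaced by $(\beta,\gamma)$ and the penalty coefficient $AM$ replaced by $A^*M$. The estimation term requires a new uniform deviation bound for the empirical process $(P_n-P)V_q(y f_{\beta,\gamma}(x,z))$ over the product ball $\mathbb{B}(R)$. Since $V_q$ is Lipschitz, the usual contraction inequality reduces this to a Rademacher average of the linear class $\{f_{\beta,\gamma} : (\beta,\gamma)\in\mathbb{B}(R)\}$, which is a Minkowski sum of the original RKHS ball and a $p$-dimensional Euclidean ball. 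Rademacher complexities of Minkowski sums add, so the Euclidean piece contributes at most $O(\Pi\sqrt{p/n})$, a term already dominated by the $\log(\delta^{-1}\log n)\vee 1)/n$ summand in the lower bound for $\lambda_n$ for fixed $p$.

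The peeling / slicing argument over balls of radius $R=2^k$ and the bridging step that converts uniform empirical-process control into the final oracle inequality then go through verbatim with the substitutions above, producing precisely the stated bound with constants $k_0$ and $k_1$ inherited from the proof of Theorem~\ref{thm:empRisk} (possibly inflated by an absolute factor absorbing the $p$-dependent Rademacher contribution). In setting (S1) this is essentially automatic because the spectral tail $\sum_{j>d}\lmb_j$ is unchanged and the added $p$-dimensional component only inflates the leading $2d/\sqrt{n}$ term by an additive constant; choosing $d$ as in Theorem~\ref{thm:empRisk} still achieves the displayed $\gamma(n)$. In setting (S2), the sup-norm entropy satisfies $H_\infty(\mathbb{B},\veps)\le H_\infty(B,\veps/2) + p\log(C/\veps)$, so Dudley's entropy integral picks up at most a $\sqrt{p}\cdot x\sqrt{\log(1/x)}$ correction, which is dominated by the original $x^2\sqrt{n}/(A^*M)$ equation defining $x^*$ and therefore does not change the rate.

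The only real obstacle is bookkeeping: one must verify that throughout the peeling argument every occurrence of the constant $A$ and the ball $B(R)$ in Theorem~\ref{thm:empRisk} gets replaced consistently by $A^*$ and $\mathbb{B}(R)$, so that the final oracle inequality has the clean form with $A^*M$ in the penalty and no extra residual terms from the Euclidean block. Because the scalar-covariate component is finite-dimensional and enters only through a bounded linear functional, this bookkeeping is routine and no new probabilistic inequality is required beyond those already used for Theorem~\ref{thm:empRisk}.
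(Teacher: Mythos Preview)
Your proposal is correct and follows exactly the approach indicated in the paper: the paper's entire argument for the corollary is the one-line remark that the proof of Theorem~\ref{thm:empRisk} goes through after replacing the ball $B(R)$ by the product ball $\mathbb{B}(R)=B(R)\times\{\gamma\in\real^p:\norm{\gamma}\le R\}$, which is precisely the substitution (together with $A\mapsto A^*$) you describe. Your write-up in fact supplies more detail than the paper does---the Cauchy--Schwarz bound giving $A^*M\cdot R$, the additivity of Rademacher complexity over the Minkowski sum, and the entropy estimate $H_\infty(\mathbb{B},\veps)\le H_\infty(B,\veps/2)+p\log(C/\veps)$---so nothing is missing.
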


\section{Simulation studies}

\label{sec-simul}
In this section, we considered two different simulation settings to investigate finite sample performance of the proposed classifier. In both settings, the functional covariate was generated in the following way: $X_i(t) = \sum_{j = 1}^{50} \xi_{ij} \zeta_j \phi_j(t)$, where $\xi_{ij}$'s are independently drawn from a uniform distribution on ($-\sqrt{3}, \sqrt{3})$, $\zeta_j = (-1)^{j+1}j^{-1}, j = 1, \ldots, 50$, and $\phi_1(t) = 1$ and $\phi_j(t) = \sqrt{2} \cos ((j-1)\pi t),~ j \geq 2$ for  $t \in [0, 1]$. Observations at 50 times points on the interval [0, 1] were available in each sample path of $X(t)$.  Two scalar covariates ($\bz = (z_1, z_2)\trans$) were independently generated from a truncated normal distribution within the interval $(-2, 2)$ with mean 0 and variance 1. Then the binary response variable $y$ with values $1$ or $-1$ was generated from the logistic model: 
$$
f(X_i, \bz_i) = \alpha_0 + \int_0^1 X_i(t)\beta(t)dt + \bz_i\trans{\bgam},~~p(Y_i = 1) = \frac{\exp\{f(X_i, \bz_i)\}}{1 + \exp\{f(X_i, \bz_i)\}},
$$
where $\alpha_0 = 0.1$ and $f(X, \bz)$ is referred to as the discriminant function in this article.

In the first scenario, the slope function of $X(t)$ was $\beta (t) = e^{-t}$, and $\bgam = (-0.5, 1)\trans$ or $(0, 0)\trans$ to ensure the discriminant function $f$ depends on or not on the scalar covariates, respectively. In the second scenario, the slope function was written as a linear combination of the functional principal components of $X$. Particularly, $\beta(t) = \sum_{j = 1}^{50} 4(-1)^{j+1}j^{-2}\phi_j(t)$, and the coefficient vector of the scalar covariates was $\bgam = (-2, 3)\trans$ or $(0, 0)\trans$. In each simulation scenario, $n = 100$ or 200 curves were generated for training. Then 500 samples were generated as the test set to assess prediction accuracy.

In addition to the proposed functional DWD classifier,  
we also considered several other commonly used functional data classifiers for comparison. The centroid classifier by \cite{delaigle2012achieving} firstly projects each functional covariate onto one specific direction and then performs classification based on the distance to the centroid in the projected space. The functional quadratic discriminant in \cite{galeano2015mahalanobis}  conducts a quadratic discriminant analysis on FPC scores, while the functional logistic classifier fits a logistic regression model on them. Note that the aforementioned classifiers, except our proposed functional DWD with scalar covariates, only account for functional covariates in classification. To study the effect of scalar covariates on classification, we fitted an SVM classifier with only these two scalar covariates when they are involved in the discriminant function. 
In each simulation trial, we randomly generated a training set of size $n = 100$ or 200 to fit all classifiers and then evaluated the predictive accuracy for all of them on a test sample of size 500. To assess the uncertainty in estimating prediction accuracy of each classifier, $B = 500$ independent simulation trials were conducted in each scenario.

\begin{table}
	\caption{ 
		\label{tab-simul}
		The mean misclassification error rates (\%) on the test sample across $M = 500$ simulations with the standard 
		errors (\%) in parentheses. Centroid, the centroid classifier proposed by Delaigle and Hall (2012). Quadratic, the functional quadratic classifier proposed by Galeano et al. (2015). Logistic, the functional logistic classifier. fDWD, our proposed functional DWD method without scalar covariates, while PLfDWD denotes our proposed functional DWD method with scalar covariates. KNN, the k-nearest neighbour classifier introduced in Chapter 8 of \cite{ferraty2006}. S-SVM, an SVM classifier depends only on scalar covariates. The column $\bz$ indicates whether or not the true discriminant function depends on the scalar covariates. }  
  \centering
     \small
     \addtolength{\tabcolsep}{-3pt}   
		\begin{tabular}{l*{9}{c}}
			\multicolumn{9}{c}{Scenario 1} \\\\\hline
			$n$ & $\bz$ & Centroid & Quadratic & Logistic & KNN & fDWD & PLfDWD &  S-SVM  \\
			\hline
			100 & Yes  & 41.3 (3.7)& 43.2 (3.0) & 41.9 (3.1)& 43.4 (3.3) & 39.8 (2.7)  &  \textbf{32.3 (2.7)} & 37.9 (5.1)  \\
			& No & 39.1 (3.7) & 40.7 (3.3) & 39.5 (3.1)& 40.8 (3.6) & \textbf{37.7 (2.8)}  & &  \\
			\hline
			200 & Yes & 40.2 (2.8) & 41.5 (2.8) & 40.5 (2.6) & 43.0 (2.8) & 39.3 (2.4) & \textbf{31.3 (2.4)} & 35.5 (3.0)\\
			& No & 37.9 (2.7) & 39.1 (2.8) & 38.2 (2.4) & 40.6 (3.1) & \textbf{37.1 (2.3)} & & \\
			\hline\\
			\multicolumn{9}{c}{Scenario 2} \\\\\hline
			$n$ & $\bz$ & Centroid & Quadratic & Logistic & KNN & fDWD & PLfDWD & S-SVM \\
			\hline
			100 & Yes& 21.7 (2.2) & 22.4 (2.4) & 21.9 (2.2) & 22.6 (2.5) & 21.0 (2.0 & \textbf{11.0 (1.9)} & 35.7 (4.5) \\
			& No & 11.1 (1.6) & 11.4 (1.6) & 11.4 (1.7) & 11.5 (1.6) & \textbf{10.4 (1.4)} & & \\
			\hline
			200 & Yes & 21.3 (2.0) & 21.5 (2.0) & 21.2 (1.9) & 22.2 (2.2) & 20.8 (1.8) & \textbf{10.5 (1.9)} & 33.9 (2.9)  \\
			& No& 10.7 (1.5) & 10.7 (1.4) & 10.6 (1.4) & 11.1 (1.5) & \textbf{10.2 (1.4)} & & \\
			\hline
		\end{tabular}
	\addtolength{\tabcolsep}{3pt}
\end{table}

Table \ref{tab-simul} summarizes the mean and the standard error of the misclassification error rate of each classifier. In the first scenario, the proposed functional DWD classifier with scalar covariates is considerably more accurate than any other classifier in terms of prediction. This is not surprising since even the SVM classifier with only scalar covariates outperforms the functional classifiers that did not take scalar covariates into consideration. This fact implies the importance of accounting for scalar covariates when the true discriminant function indeed depends on them. Additionally, whether or not the true discriminant function depends on the scalar covariates in these settings, the misclassification error rates of our proposed functional DWD classifier, are very close to the Bayes errors, which are 0.283 and 0.376, respectively. 
As the projection function in the centroid classifier and the slope function in the functional logistic regression by \cite{leng2005classification} are represented in terms of FPCs, these two classifiers should be in favor in the second scenario. This was justified by comparing prediction accuracy of these classifiers. However, our proposed classifier still dominates all competitors no matter whether the true discriminant function depends on the scalar covariates. A plausible reason why the proposed classifier is superior to the centroid and logistic classifiers is that the roughness of the projection direction is appropriately controlled in our method. Once again, the misclassification rates of our proposed classifiers, are very close to the Bayes errors, which are 0.086 with scalar covariates and 0.099 without scalar covariates, respectively. 

\bigskip

\noindent \blue{Pj: You can decide whether boxplots of misclassification error rate of each classifier in each scenario should be put in the Appendix. All of them are now in the folder ``Figure".}
 
\section{Real data examples} \label{sec-real}

In this section,  we apply the proposed classifiers as well as several alternative classifiers to one real-world example to demonstrate the performance of our proposal. 

Alzheimer's disease (AD) is an irreversible and progressive brain disorder that can lead to more and more serious dementia symptoms over a few years.
Previous studies showed that increasing age is one of the most important risk factor of AD, and most patients with AD are above 65. However, there also exist substantial early-onset Alzheimer's whose ages are under 65. Things are even worse considering the fact that there is no current cure for AD and AD would even eventually destroy people's ability to perform the simplest tasks. Due to the reasons above, studies of AD have received considerable attention in the past few years.

In our study, the data
were obtained from the ongoing Alzheimer’s Disease Neuroimaging Initiative (ADNI), which aims to unit researchers from the world to collect, validate and analyze relevant data. In particular, the ADNI is interested in identifying biomarkers of AD from genetic, structural, and functional neuroimaging, and clinical data. 
The dataset consists of two main parts. The
first part is neuroimaging data collected by diffusion tensor imaging (DTI). More specifically, fractional anisotropy (FA) values were measured at 83 locations
along the corpus callosum (CC) fiber tract for each subject. 
The second part is composed of demographic features like gender (a categorical variable),
handedness (left hand or right hand, a categorical variable), the age, the education level,  the AD status and mini-mental state examination (MMSE) scores. The AD status is a categorical variable with three levels: normal control (NC), mild cognitive impairment (MCI)  and Alzheimer’s disease (AD). We combine the first two categories into one single category for simplicity, and then this status variable is treated as a binary outcome in our following analysis. The MMSE is one
of the most widely used test of cognitive functions such as orientation, attention, memory, language and visual-spatial skills  for assessing the level of dementia a patient may have. 
A more detailed description of the data can be found in http://adni.loni.usc.edu. 
Previous studies, such as \cite{li2017} and \cite{tang2018}, focused on building regression models to investigate the relationship between the progression of the AD status and the neuroimaging and demographic data. However, 
our main objective is to use the DTI data and demographic features to predict
the status of AD.

\begin{figure}[H]
\centering
{\includegraphics[width=6cm]{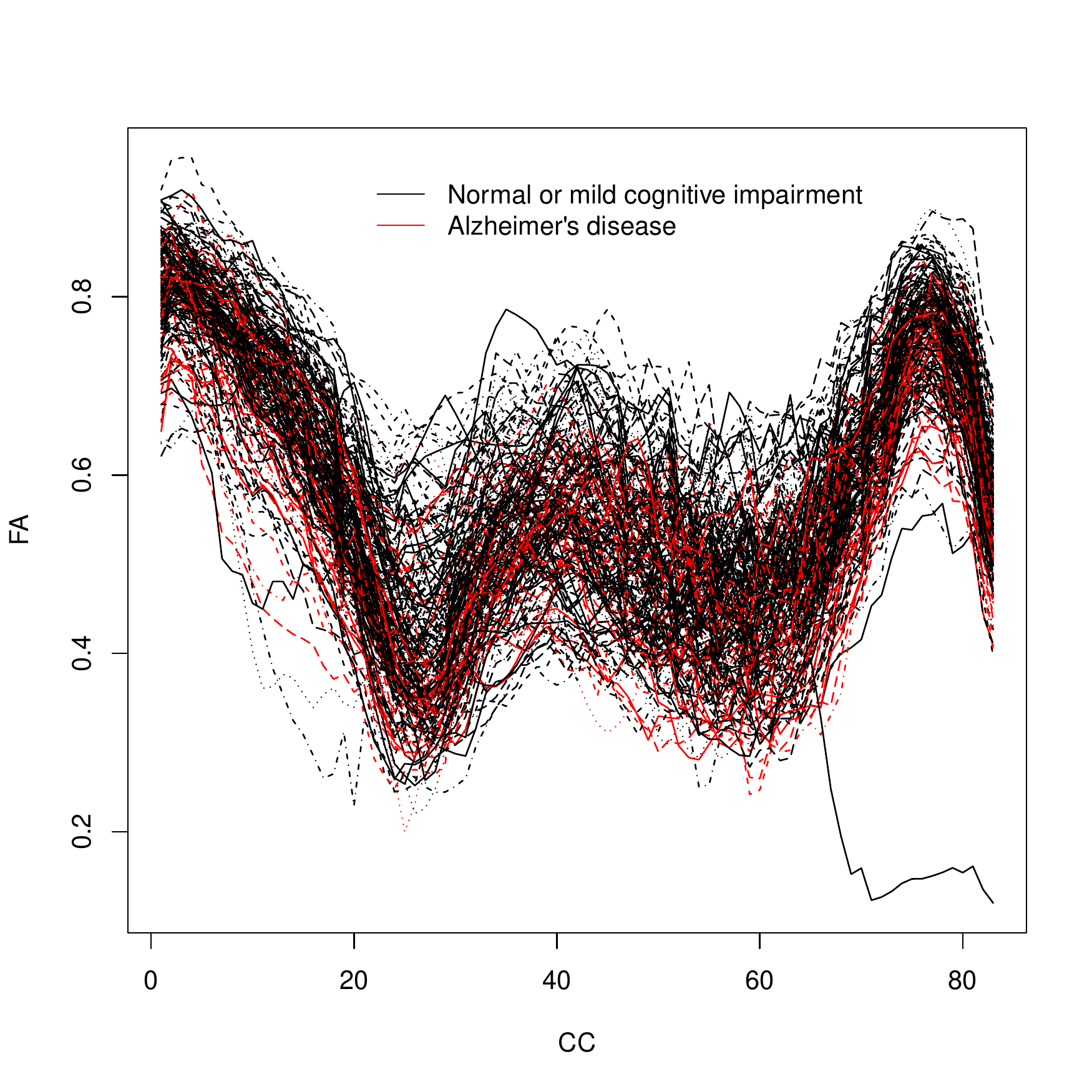}}
\caption{FA profiles of two groups of people: the normal or mild cognitive impairment group and the AD group.}
\label{fig:ADNI}
\end{figure}

We had $N = 214$ subjects in our analysis after removing 3 subjects with missing values. Among them, there are $N_0 = 172$ subjects from the first group, i.e.,
the status of them is either NC or MCI, and $N_1 = 42$ subjects from the AD group. 
The functional predictor $X(t)$ was taken as the FA profiles, which are displayed in Figure \ref{fig:ADNI}, and the scalar covariates $\bz$ consisted of gender, handedness, the age, the education level and the MMSE score. To justify the importance of incorporating FA profiles in classification, we considered an SVM classifier with only these scalar covariates. 
To compare the prediction performance of
each classifier, these 214 subjects were randomly divided into a training set with $n$ subjects and a test set with the other
$N - n$ subjects. In the study, we considered two particular choices of $n$: $0.5N$ and $0.8N$. 
Following this rule, we randomly splitted the whole dataset to the training and test set for $M = 500$ times.

\begin{table}
	\caption{
		\label{tab:ADNI}
		The mean misclassification error rates (\%) across $M = 500$ random splits with 
		standard errors in brackets.}
	\addtolength{\tabcolsep}{-4pt}   
	\centering
	\small
	
	\begin{tabular}{@{}llcccccc@{}}
		\hline
		$n$ & Centroid & Quadratic & Logistic & KNN & fDWD & PLfDWD & S-SVM \\
		107 & 19.6 (0.029)   &  20.6 (0.031) &   21.4 (0.03) &   21.0 (0.028) &    19.8 (0.03) &    11.2 (0.028) &   26.3 (0.032)   \\
		171 &  19.8 (0.054) &    20.5 (0.056) &    21.5 (0.057) &    21.7 (0.054) &    19.8 (0.054) &     9.9 (0.048) &    27.2 (0.059) \\
		\hline
	\end{tabular}
\addtolength{\tabcolsep}{4pt}   
\end{table}

Table \ref{tab:ADNI} summarizes the mean misclassification error rates and the standard errors across the 500 splits for each classifier. 
When scalar covariates are not accounted for in the functional data classification, our proposed method (fDWD) outperforms all other competitors except the centroid method in terms of prediction accuracy. Even more remarkably, incorporating scalar covariates, though in a linear manner, results in a substantial reduction in misclassification errors in our proposed classifier, around half of the errors of all functional classifiers without scalar covariates. You might argue that this occurred because  scalar covariates are highly predictive of the AD status while the functional covariate is not. However, the prediction performance of S-SVM disproved this argument, as the SVM classifier that only considered scalar covariates performed even worse than those with only the functional covariate in terms of prediction accuracy. On the one hand, these comparisons indicate superiority of our proposed classifier. On the other hand, they also suggest accounting for scalar covariates in an appropriate way is able to enhance prediction accuracy when discriminating functional data. 

\section{Conclusion} \label{sec-end}
In this paper we propose a novel methodology that combines the idea of the canonical DWD classifier and regularized functional linear regression under the RKHS framework to classify functional data. The use of RKHS enables us to control roughness of the estimated projection direction, and thus enhances prediction accuracy in comparison with conventional functional logistic regression and the centroid classifier. Moreover, we further extend the framework to classifying subjects with both a functional covariate and several scalar covariates. Though we focus on the specific loss function to achieve nice properties of DWD classifiers in our study, this framework can be extended to other loss functions such as the logistic loss function in the functional logistic regression and the hinge loss function in functional SVM classifiers. Moreover, the scalar covariates are incorporated in our classifier in a linear manner to achieve a good trade off between flexibility and interpretability. However, nonlinear or nonparametric forms of scalar covariates can also be incorporated into our current framework, as long as we adopt appropriate regularizations to avoid overfitting. This direction deserves future investigation in both theory and practice.

Numerical studies including both simulation studies and one real-world application suggest that the proposed classifier is superior to many other competitors in terms of predication accuracy. The application of our classifier to a study of Alzheimer's disease provides numerical evidence that both neuroimaging data and demographic features are relevant to AD, and ignoring either of them would deteriorate prediction accuracy of the AD status.

\bibliography{dwd}{}
\bibliographystyle{rss}

\section*{Appendix}

\subsection*{A.1: Proof of Theorem 1}

Based on the observations above, we have $Kx_i \in \mH_1, i = 1, \ldots, n$. Therefore,
the solution to \eqref{eq-target} can be written as
	$$
	\hbeta_n(t) = \sum_{l = 1}^N d_l\psi_l(t) + \sum_{i = 1}^n c_i(Kx_i)(t) + \rho(t),
	$$
where $\rho(t)$ is the orthogonal complement of $\sum_{i = 1}^n c_i(Kx_i)(t)$ in $\mH_1$. To prove \eqref{eq-sol}, we just need to check that $\rho(t) = 0$.
	
Let $\bpsi(t) = (\psi_1(t), \ldots, \psi_N(t))^T$.
Plugging the solution into \eqref{eq-target}, we have
	$$
	\min_{\alpha, \bd, \bc, \rho} n^{-1} \sum_{i = 1}^n V_q\left\{y_i\left(\alpha + \int_{\mI} x_i(t) \left(\bd^T\bpsi(t) + \sum_{j = 1}^n c_j (Kx_j)(t) + \rho(t) \right) dt\right)\right\} + \lambda J(\beta).
	$$
In the first term,
	\begin{align*}
	\int_{\mI} x_i(t) & \left(\bd^T\bpsi(t) + \sum_{j = 1}^n c_j (Kx_j)(t) + \rho(t) \right) dt\\ 
	& = \int_{\mI} x_i(t) \left(\bd^T\bpsi(t) + \sum_{j = 1}^n c_j (Kx_j)(t)\right) dt +  \int_{\mI} x_i(t) \rho(t) dt \\
	& = \int_{\mI} x_i(t) \left(\bd^T\bpsi(t) + \sum_{j = 1}^n c_j (Kx_j)(t)\right) dt  + \langle Kx_i, \rho \rangle_{\mH} \\
	& = \int_{\mI} x_i(t) \left(\bd^T\bpsi(t) + \sum_{j = 1}^n c_j (Kx_j)(t)\right) dt.
	\end{align*}
In other words, the first term does not depend on $\rho$. As we know, the second term, $\lambda J(\beta)$, is minimized when $\rho(t) = 0$ since $\rho$ is orthogonal to $\sum_{l = 1}^N d_l\psi_l(t) + \sum_{i = 1}^n c_i(Kx_i)(t)$ in $\mH_1$.

\subsection*{A.2: Proofs of Theorem~\ref{thm:empRisk}}

To prove Theorem~\ref{thm:empRisk}, we first prove the
following lemmas.
We also define
$B(R) = \{ \beta\in\mathcal{W}_2^m(I)\,:\, \norm{\beta}\le R\}$
with $R\in\mathcal{R}\subset\real^+$ a countable set.
In what follows, Lemmas~\ref{lem:varS1} and~\ref{lem:empS1}
are for setting (S1) and Lemmas~\ref{lem:varS2}
and~\ref{lem:empS2} are for setting (S2).

\begin{lemma}
  Let $f = \int_\mathcal{I}x_i(t)\beta(t)dt$ with 
  $x(t)\in L^2(I)$ such that $\norm{x(t)}_{L^2} \le A\in\real^+$
  and reproducing kernel $k:I\times I\rightarrow\real$ 
  such that $\norm{k}_{L^2}\le M\in\real^+$.
  Then, 
  $$
    \norm{V_q(yf)}_{\infty,R} = \sup_{\beta\in B(R)} V_q(yf)
     \le 1+RAM~~\text{a.s.}
  $$ 
  for all $m\in\mathcal{M}$.
\end{lemma}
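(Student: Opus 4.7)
The plan is to split the bound into two independent pieces: an elementary pointwise upper bound on the DWD loss $V_q$, followed by a Cauchy-Schwarz-type bound on $|yf|$ that uses the $L^2$ bound on $x$ together with the RKHS embedding controlled by the kernel bound in (C2).

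First I would establish the universal pointwise inequality $V_q(u) \le 1 + |u|$ for every real $u$. Using the piecewise definition in \eqref{eq-Vq}: if $u \le q/(q+1)$ then $V_q(u) = 1 - u \le 1 + |u|$; if $u > q/(q+1)$ then $V_q$ is continuous and strictly decreasing on this branch, so $V_q(u) \le V_q(q/(q+1)) = 1/(q+1) \le 1 \le 1 + |u|$. This single estimate handles both signs of the argument, so the factor $y\in\{-1,1\}$ will drop out cleanly in the next step.

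Next I would bound $|yf|$ uniformly for $\beta \in B(R)$. Since $|y|=1$, we have $|yf| = |f| = \bigl|\int_\mathcal{I} x_i(t)\beta(t)\,dt\bigr|$. Cauchy-Schwarz in $L^2(\mathcal{I})$ and (C2) give $|f| \le \|x_i\|_{L^2}\|\beta\|_{L^2} \le A\|\beta\|_{L^2}$. The remaining ingredient is to dominate $\|\beta\|_{L^2}$ by $\|\beta\|_\mathcal{H}$, which is where the kernel bound enters: by the reproducing property $\beta(t) = \langle \beta, k(\cdot,t)\rangle_\mathcal{H}$ and Cauchy-Schwarz in $\mathcal{H}$, one obtains $|\beta(t)|^2 \le k(t,t)\|\beta\|_\mathcal{H}^2$, and integrating in $t$ yields $\|\beta\|_{L^2}^2 \le \|\beta\|_\mathcal{H}^2 \int_\mathcal{I} k(t,t)\,dt$. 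The condition $\|k\|_{L^2}\le M$ in (C2) is then used to control this trace-type integral by $M^2$, giving $\|\beta\|_{L^2}\le M\|\beta\|_\mathcal{H}\le MR$. Combining the two inequalities, $|yf|\le ARM$ a.s., and the first step yields $V_q(yf)\le 1 + ARM$ uniformly over $\beta\in B(R)$, so the supremum is bounded by the same quantity a.s.

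The main obstacle is reconciling the interpretation of $\|k\|_{L^2}$ in (C2) with the trace-type quantity $\int_\mathcal{I} k(t,t)\,dt$ that emerges naturally from the pointwise RKHS embedding. If $\|k\|_{L^2}$ refers to the $L^2(\mathcal{I}\times\mathcal{I})$ norm of the bivariate kernel, an alternative and cleaner route is to write $f = \langle Kx_i,\beta\rangle_\mathcal{H}$ using \eqref{eq-kernel}, apply Cauchy-Schwarz in $\mathcal{H}$, and bound $\|Kx_i\|_\mathcal{H}^2 = \int\!\int k(t,s)x_i(t)x_i(s)\,dt\,ds \le \|k\|_{L^2(\mathcal{I}\times\mathcal{I})}\|x_i\otimes x_i\|_{L^2(\mathcal{I}\times\mathcal{I})} \le M A^2$ by another Cauchy-Schwarz; any $\sqrt{M}$ versus $M$ discrepancies would be absorbed into the constant convention. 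Either route delivers the target bound $1 + RAM$, and the resulting estimate is deterministic in $\beta$, so the a.s.\ qualifier comes only from the tacit almost-sure realization of $x_i\in L^2(\mathcal{I})$ with $\|x_i\|_{L^2}\le A$.
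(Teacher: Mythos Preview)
Your proposal is correct and follows essentially the same route as the paper: establish $V_q(u)\le 1+|u|$, then use the reproducing property $\beta(t)=\langle\beta,k(t,\cdot)\rangle_\mathcal{H}$ together with Cauchy--Schwarz to extract the product $\|\beta\|_\mathcal{H}\|x\|_{L^2}\|k\|_{L^2}\le RAM$. The paper's proof is terser (it bounds $|\int x(t)\langle\beta,k(t,\cdot)\rangle dt|\le \|\beta\|\|x\|_{L^2}\|k\|_{L^2}$ in a single line without spelling out the pointwise embedding or the $V_q$ estimate), but the underlying mechanism is identical, and your ``alternative route'' via $f=\langle Kx_i,\beta\rangle_\mathcal{H}$ is in fact the cleanest reading of what the paper does; your flagged ambiguity about whether $\|k\|_{L^2}$ yields $M$ or $\sqrt{M}$ is a genuine notational looseness already present in the paper, not a defect in your argument.
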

\begin{proof}
  Via the Riesz representation theorem, we have 
  $f(\cdot) = \iprod{\beta}{\cdot}$ for some 
  $\beta\in\mathcal{W}_2^m(\mathcal{I})$.  
  Using \holders inequality or the Cauchy-Schwarz 
  inequality, we have the following bound.
  \begin{align*}
  \abs*{V_q\left( yf \right)}
  &\le 
  1 + \abs*{\int_I x(t)\beta(t) dt }\\
  &=
  1 + \abs*{\int_{I} x(t) \iprod{\beta}{k(t,)} dt }\\
  &\le 
  1 + \norm{\beta}\norm{x}_{L^2}\norm{k}_{L^2}\\
  &= 1 + RAM.
  \end{align*}
\end{proof}

\begin{lemma}
  \label{lem:varS1}
  For all $\beta,\beta'\in\mathcal{W}_2^m(\mathcal{I})$
  with $f = \iprod{\beta}{}$ and $f'=\iprod{\beta'}{}$,
  $
  \mathrm{Var}[{ V_q(yf) - V_q(yf') }] \le \xv[(f-f')^2].
  $
  Furthermore, under Conditions (C1) and (C2), 
  $$
    \xv[(f-f')^2] \le \left(
      \frac{RAM}{\eta_1} + \frac{5}{\eta_0}
    \right) L(f,f^*)
  $$ 
  for all $\beta,\beta'\in B(R)$ and all $R\in\mathcal{R}$
  where 
  $f^*\in\argmin_{g\in\mathfrak{G}}\xv[V_q(g)]$ and
  $L(f,f^*) = \xv{[ V_q(yf)-V_q(yf^*) ]}$  is the 
  associated risk.
\end{lemma}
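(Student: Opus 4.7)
The plan is to handle the two claims of the lemma separately, since they rely on essentially disjoint ingredients.

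For the variance inequality, I would start from the standard bound $\Var(Z)\le\E[Z^{2}]$, which reduces the task to proving $\E[(V_q(yf)-V_q(yf'))^{2}]\le\E[(f-f')^{2}]$. The key step is to verify that $V_q$ is $1$-Lipschitz on $\real$. Differentiating the piecewise definition in \eqref{eq-Vq}, one finds that $V_q'$ equals $-1$ on the linear piece, equals $-q^{q+1}u^{-(q+1)}(q+1)^{-(q+1)}$ on the tail piece, is continuous across the kink $u=q/(q+1)$, and is monotone with value $-1$ at the kink, so $\sup_{u\in\real}|V_q'(u)|=1$. Since $y\in\{-1,1\}$, this gives the pointwise bound $|V_q(yf)-V_q(yf')|\le|yf-yf'|=|f-f'|$, and squaring then taking expectations finishes this part.

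For the second inequality, the plan is to establish a pointwise quadratic lower bound on the excess conditional risk and then integrate. Write $R_x(g)=\eta(x)V_q(g)+(1-\eta(x))V_q(-g)$, so that $L(f,f^{*})=\E[R_X(f(X))-R_X(f^{*}(X))]$. Condition~(C1) places $\eta(x)$ a distance at least $\eta_0$ from $1/2+c_q/4$ and bounds $\min\{\eta(x),1-\eta(x)\}$ below by $\eta_1$; the first bound is precisely what is needed so that the pointwise minimizer $f^{*}(x)$ lies in the strictly convex region of one of the two summands defining $R_x$ (the region where $V_q''>0$), while the second rules out degenerate classes. A short computation then yields $R_x''(g)\ge c\eta_0$ on a neighbourhood of $f^{*}(x)$ for a universal $c$, and Taylor's theorem gives $R_x(g)-R_x(f^{*}(x))\ge\tfrac{c\eta_0}{2}(g-f^{*}(x))^{2}$ on that neighbourhood, producing the $5/\eta_0$ contribution.

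Outside this neighbourhood strong convexity is unavailable, so I would instead use Condition~(C2) together with the Riesz--Cauchy--Schwarz bound from the preceding lemma to conclude $|f(x)|\le RAM$ for $\beta\in B(R)$. On this bounded range $V_q$ is $1$-Lipschitz, and using $\min\{\eta,1-\eta\}\ge\eta_1$ to invert the excess conditional loss yields $(g-f^{*}(x))^{2}\le(RAM/\eta_1)(R_x(g)-R_x(f^{*}(x)))$. Combining the two regimes and integrating in $X$ gives $\E[(g-f^{*}(X))^{2}]\le(RAM/\eta_1+5/\eta_0)L(g,f^{*})$; this is then applied inside the main-theorem proof via the decomposition $(f-f')^{2}\le 2(f-f^{*})^{2}+2(f'-f^{*})^{2}$, where $f'$ is the oracle reference against which $\hat f$ is compared, to deliver the stated inequality. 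The hardest step will be the interface between the two regimes: verifying that $f^{*}(x)$ genuinely lies in the smooth convex region thanks to the shift $c_q/4$ in (C1), and tracking multiplicative constants carefully enough to arrive at the exact coefficients $RAM/\eta_1$ and $5/\eta_0$ rather than looser versions of the same shape.
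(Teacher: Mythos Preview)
Your treatment of the first claim matches the paper's exactly: both arguments observe that $|V_q'|\le 1$, hence $V_q$ is $1$-Lipschitz, and conclude via $|y|=1$.

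For the second inequality your route differs substantially from the paper's, and there is a genuine gap. The paper does \emph{not} use Taylor expansion or strong convexity. It takes $f^*(x)=2\Indc{\eta(x)>1/2}-1$, the Bayes classifier valued in $\{-1,1\}$, and bounds the pointwise ratio
\[
\frac{(f-1)^2}{\eta(V_q(f)-V_q(1))+(1-\eta)(V_q(-f)-V_q(-1))}
\]
by pure algebra: substitute $V_q(1)=c_q$, $V_q(-1)=2$, use $V_q(u)\ge(1-u)_+$, set $g=-1-f$ in the regime $f<-1$, and simplify the resulting fraction term by term to reach $g/\eta+4/\eta+1/\eta_0\le RAM/\eta_1+5/\eta_0$. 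Condition~(C1) enters only to guarantee the denominator $\eta g+\eta(4-c_q)-2$ stays positive.

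Your sketch instead treats $f^*$ as the pointwise minimiser of $R_x$ and argues via $R_x''\ge c\eta_0$ near it. Two issues arise. First, for the DWD loss the conditional minimiser is $(q/(q+1))(\eta/(1-\eta))^{1/(q+1)}$, not $\pm 1$, so your $f^*$ is a different object from the paper's; at that point one computes $R_x''$ to scale with $1-\eta$ and hence with $\eta_1$, not $\eta_0$, so the claimed lower bound $R_x''\ge c\eta_0$ is not what (C1) delivers. Second, your reading of (C1) --- that the shift $c_q/4$ is there to place $f^*$ in the strictly convex region --- does not match its actual role, which is the purely algebraic positivity of $\eta(4-c_q)-2$. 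Finally, closing via $(f-f')^2\le 2(f-f^*)^2+2(f'-f^*)^2$ introduces factors of $2$ not present in the stated constants. The strong-convexity template is a reasonable heuristic for this kind of inequality, but as written it would yield neither the paper's constants nor the paper's dependence on $\eta_0$ and $\eta_1$.
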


\begin{proof}
  First, we note that for $q\in(0,\infty)$ that $V_q(u)$ 
  is differentiable with $\abs{\frac{d}{du}V_q(u)}\le1$.
  Hence, $V_q$ is Lipschitz implying that 
  $$
  \xv{[ (V_q(yf) - V_q(yf'))^2 ]} 
  \le \xv[(f-f')^2]
  $$
  proving the first part of the lemma.
  \\
  
  For the bound on $\xv[(f-f')^2]$, let
  $f^*(x) = 2\Indc{ \eta(x) > 0.5}-1$
  with $\eta(x)=\prob{Y=1|X=x}$.
  Without loss of generality, we will consider the case
  $\eta>1/2$ and $f^*=1$.
  Note that $V_q(f)\ge (1-f)_+$ for $q\in(0,\infty)$.
  Note also that 
  $f = \int x\beta dt = \iprod{Kx}{\beta}$, so 
  $\sup_x f \le  RAM$ 
  for $\beta\in B(R)$.
  We have the ratio
  \begin{align*}
  \frac{
  	\xv[(f-f^*)^2|X=x]
  }{
  	\xv[V_q(yf)-V_q(yf^*)|X=x]
  }	
  &=
  \frac{
  	(f-1)^2
  }{
  	\eta(V_q(f)-V_q(1)) +
  	(1-\eta)(V_q(-f)-V_q(-1))
  }\\
  &= 
  \frac{
  	(f-1)^2
  }{
  	\eta V_q(f) +
  	(1-\eta)V_q(-f)
  	+\eta(2-c_q)-2
  }\\
  &\le
  \frac{
  	(f-1)^2
  }{
  	\eta (1-f)_+ +
  	(1-\eta)(1+f)_+
  	+\eta(2-c_q)-2
  }
  \end{align*}
  If $f<-1$, then denote $g = -1-f$ so $f = -1-g$. 
  Also, note that $2\eta-1 \ge 2\eta_0$.
  Thus, if we choose $q$ such that 
  $c_q < 4 - 2/\eta_0$, then
  \begin{align*}
  \frac{
  	\xv[(f-f^*)^2|X=x]
  }{
  	\xv[V_q(f)-V_q(f^*)|X=x]
  }	
  &\le
  \frac{
  	(f-1)^2
  }{
  	\eta (1-f)_+ 
  	+\eta(2-c_q)-2
  }\\
  &=
  \frac{
  	(g+2)^2
  }{
  	\eta (g+2) 
  	+\eta(2-c_q)-2
  }\\
  &=
  \frac{
  	(g+2)^2
  }{
  	\eta g 
  	+\eta(4-c_q)-2
  }\\
  &\le
  \frac{
  	g
  }{
  	\eta  
  }
  +
  \frac{
  	4
  }{
  	\eta  
  }
  +
  \frac{
  	1
  }{
  	\eta_0  
  }\\
  &\le 
  2RAM + 5/\eta_0
  \le
  RAM/\eta_1 + 5/\eta_0.
  \end{align*}
\end{proof}

\begin{lemma}
	\label{lem:empS1}
	Under Conditions (C1) and (C2),
	let $\phi_R$ be a sequence of subroot functions 
with $r_R^*$ being the solution to 
$\phi_R(r) = r/C_R$.  Then, for all 
$R\in\mathcal{R}$, $\beta_0\in B(R)$, corresponding 
$f_0=\iprod{\beta_0}{}$, $r\ge r_R^*$, and
$d^2(f,f_0)=\xv[(f-f_0)^2]$
$$
\xv\left[
\sup_{\substack{\beta\in B(R)\\d^2(f,f_0)\le r}}
(P-P_n)(\ell(f)-\ell(f_0))
\right] \le
\frac{4}{\sqrt{n}}
\inf_{d\in\natural} \left(\sqrt{dr}+AR\sqrt{\sum_{j>d}\lmb_j}\right)
=: \phi_R(r)
$$
with 
$$
r^*\le
\frac{8C_R^2}{\sqrt{n}}\inf_{d\in\natural}\left[
\frac{2d}{\sqrt{n}} + \frac{A\eta_1}{M}\sqrt{\sum_{j>d}\lmb_j} 
\right].
$$
\end{lemma}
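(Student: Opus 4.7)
The plan is to reduce the left-hand side to a localized Rademacher average over the RKHS class via symmetrization and contraction, exploit the Mercer decomposition of $K$ to split frequencies and bound each piece separately, and finally solve the subroot fixed-point equation $\phi_R(r)=r/C_R$ via an AM--GM manipulation.

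First I would symmetrize the empirical process $(P-P_n)(\ell(f)-\ell(f_0))$ to pick up a Rademacher sum, and then apply the Ledoux--Talagrand contraction principle.  The latter is legitimate because $V_q$ is $1$-Lipschitz (as already observed in the proof of Lemma~\ref{lem:varS1} via the bound $|V_q'|\le 1$), so that up to a universal constant the quantity to control becomes
$$
\frac{1}{n}\,\xv\sup_{\substack{\beta\in B(R)\\ d^2(f,f_0)\le r}}\Bigl|\sum_{i=1}^n \sigma_i\bigl(f(x_i)-f_0(x_i)\bigr)\Bigr|,
$$
where $\sigma_1,\dots,\sigma_n$ are i.i.d.\ Rademacher signs independent of the data.

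Second I would invoke Mercer's theorem for the reproducing kernel, writing $K(s,t)=\sum_j \lmb_j\phi_j(s)\phi_j(t)$ with $\{\phi_j\}$ an $L^2(\mI)$-orthonormal eigensystem, so that any $\beta-\beta_0=\sum_j b_j\phi_j$ satisfies $\norm{\beta-\beta_0}_\mH^2=\sum_j b_j^2/\lmb_j\le 4R^2$ and $(f-f_0)(x)=\sum_j b_j\iprod{x}{\phi_j}_{L^2}$.  Splitting the sum at level $d$, the low-frequency part lives in a $d$-dimensional subspace whose $L^2(P_X)$-diameter is at most $\sqrt{r}$ by the constraint $d^2(f,f_0)\le r$, which yields an empirical Rademacher average of order $\sqrt{dr/n}$ by a dimensional Cauchy--Schwarz step.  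For the tail I would use Cauchy--Schwarz in the RKHS,
$$
\Bigl|\sum_{j>d} b_j w_j\Bigr|\le \sqrt{\sum_{j>d} b_j^2/\lmb_j}\,\sqrt{\sum_{j>d}\lmb_j w_j^2}\le 2R\sqrt{\sum_{j>d}\lmb_j w_j^2},
$$
with $w_j=n^{-1}\sum_i\sigma_i\iprod{x_i}{\phi_j}_{L^2}$; taking expectations, using independence of the $\sigma_i$'s, and invoking the bound $\xv\iprod{x_1}{\phi_j}_{L^2}^2\le \xv\norm{x_1}_{L^2}^2\le A^2$ supplied by condition (C2) controls this by a universal multiple of $AR\sqrt{\sum_{j>d}\lmb_j/n}$.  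Summing the two contributions and infimizing over $d\in\natural$ delivers the stated $\phi_R(r)$.

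Third, with $\phi_R$ verified as a subroot majorant of the local Rademacher complexity, the fixed point of $\phi_R(r)=r/C_R$ admits a closed-form upper bound via AM--GM. For each fixed $d$, inserting the $\phi_R$-expression and absorbing $4C_R\sqrt{dr^*/n}/\sqrt{n}$ against $r^*/2$ via $2\sqrt{ab}\le a+b$ gives
$$
r^*\le \frac{16C_R^2 d}{n} + \frac{8C_R AR}{\sqrt{n}}\sqrt{\sum_{j>d}\lmb_j}.
$$
Because the variance constant from Lemma~\ref{lem:varS1} is bounded below by a constant multiple of $RM/\eta_1$, the tail coefficient $AR/C_R$ is at most $A\eta_1/M$, yielding the stated bound after minimizing in $d$.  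The main obstacle I anticipate is the Mercer step: aligning the RKHS-norm ball $B(R)$, the $L^2(P_X)$-variance constraint $d^2(f,f_0)\le r$, and the kernel eigenvalue tail $\sum_{j>d}\lmb_j$ in a single orthogonal decomposition, and controlling the Rademacher-weighted moments $\xv w_j^2$ using only (C2) rather than distribution-specific moments of $x(t)$, is where most of the technical care lies.
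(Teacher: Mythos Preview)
Your proposal is correct and follows essentially the same route as the paper: symmetrization plus Lipschitz contraction to reduce to a localized Rademacher average, then the Mercer-split bound $\sqrt{dr/n}+AR\sqrt{\sum_{j>d}\lmb_j/n}$, and finally solving the fixed point using $C_R\ge RM/\eta_1$. The only cosmetic differences are that the paper invokes the Rademacher bound as a black box (Lemma~6.8 of \cite{blanchard2008statistical}, originally \cite{mendelson2003kernel}) rather than sketching the eigen-decomposition argument, and it solves the fixed-point equation via the quadratic formula together with $(x+y)^2\le 2x^2+2y^2$ instead of AM--GM, arriving at the identical bound.
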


\begin{proof}
	We first define the Rademacher average for a function
	$g:\mathcal{X}\times\mathcal{Y}\rightarrow\real$ to be 
	$\mathcal{R}_n g = n^{-1}\sum_{i=1}^n\veps_ig(X_i,Y_i)$
	for $\veps_1,\ldots,\veps_n$ iid Rademacher random variables
	that are independent of the $(X_i,Y_i)$.  This can be applied
	to a class of $\mathcal{G}$ by denoting 
	$\mathcal{R}_n\mathcal{G} = \sup_{g\in\mathcal{G}}\mathcal{R}_ng$.
\\
	
	Lemma~{6.7} from \cite{blanchard2008statistical} uses a standard 
	symmetrization trick to prove that for 
	any such collection of real valued functions $\mathcal{G}$,
	some 1-Lipschitz function $\varphi$, and some 
	any $g_0\in\mathcal{G}$ that 
	\begin{equation}
	\label{lem67Blan}
	\xv\left[
	\sup_{g\in\mathcal{G}} 
	\abs{(P-P_n)(\varphi(g)-\varphi(g_0))}
	\right] \le 4 \xv \mathcal{R}_n\left\{
	g-g_0\,:\,g\in\mathcal{G}
	\right\}
	\end{equation}
	Lemma~6.8 from \cite{blanchard2008statistical} comes from 
	\cite{mendelson2003kernel}.  It builds off the previous result 
	to note that 
	\begin{align}
	\nonumber
	\xv \mathcal{R}_n\left\{
	g\in\mathcal{H}_k\,:\,\norm{g}_k\le R,\,
	\norm{g}_{2,p}^2\le r
	\right\} 
	&\le \frac{1}{\sqrt{n}}
	\inf_{d\in\natural} \left(\sqrt{dr}+R\sqrt{\sum_{j>d}\lmb_j}\right)\\
	&
	\label{lem68Blan}
	\le
	\sqrt{\frac{2}{n}}\left(
	\sum_{j=1}^\infty \min\{r,R^2\lmb_j\}
	\right)^{1/2}
	\end{align}
	where $\lmb_j$ is the $j$th eigenvalue of the 
	reproducing kernel and where for 
	$g(t) = \sum_{j=1}^\infty\alpha_j\psi_j(t)$ 
	the norms are $\norm{g}_k^2 = \sum \alpha_j^2$ and
	$\norm{g}_{2,p}^2 = \sum \lmb_j\alpha_j^2$.
\\	
	
	By noting that $V_q(\cdot)$ is a 1-Lipschitz function 
	for any choice of $q\in\real^+$, we apply Equation~\ref{lem67Blan} 
	for $x(t)$ with $\norm{x}_{L^2}\le A$ to get
	\begin{multline*}
	\xv\left[
	\sup_{\substack{\beta\in B(R)\\d^2(f,f_0)\le r}}
	(P-P_n)(V_q(f)-V_q(f_0))
	\right] \\
	\le  4\mathcal\xv{R}_n\left\{
	f - f_0\,:\, 
	\beta\in B(R), d^2(f,f_0)\le r
	\right\}
	\le  4\xv\mathcal{R}_n\left\{
	f - f_0\,:\, 
	\norm{f} \le AR, \norm{f}_{2,P}^2\le r
	\right\}.
	\end{multline*}
	Then, application of Equation~\ref{lem68Blan} gives
	$$
	\xv\left[
	\sup_{\substack{\beta\in B(R)\\d^2(f,f_0)\le r}}
	(P-P_n)(V_q(f)-V_q(f_0))
	\right]
	\le
	\frac{4}{\sqrt{n}}
	\inf_{d\in\natural} \left(\sqrt{dr}+AR\sqrt{\sum_{j>d}\lmb_j}\right)
	=: \phi_R(r).
	$$
	Thus, we aim to solve $r/C = \phi(r)$.  Let $d^*$ be the minimizer
	over $d\in\natural$, which exists due to the reproducing kernel 
	being a trace class operator, which in turn implies that 
	$\sqrt{dr}+AR(\sum_{j>d}\lmb_j)^{1/2}$ is finite for all $d$ 
	and tends to $\infty$ as $d\rightarrow\infty$.  
	Then, application of the quadratic formula and the 
	convexity result $(x+y)^2\le2x^2+2y^2$ gives
	\begin{align*}
	0 &= r^* - 
	\frac{4C}{\sqrt{n}}
	\left(\sqrt{d^*r^*}+AR\sqrt{\sum_{j>d^*}\lmb_j}\right)\\
	r^* &= \left[
	\frac{2C\sqrt{d^*}}{\sqrt{n}} +
	\left\{
	\frac{4C^2d^*}{n} + \frac{4C}{\sqrt{n}}AR\sqrt{\sum_{j>d}\lmb_j}
	\right\}^{1/2}
	\right]^2\\
	&\le
	\frac{8C^2}{n}\left[
	2d^* + \frac{AR\sqrt{n}}{C}\sqrt{\sum_{j>d^*}\lmb_j} 
	\right].
	\end{align*}
	Finally, choosing $C \ge MR/\eta_1$ gives
	$$
	r^*\le
	\frac{8C^2}{\sqrt{n}}\inf_{d\in\natural}\left[
	\frac{2d}{\sqrt{n}} + \frac{A\eta_1}{M}\sqrt{\sum_{j>d}\lmb_j} 
	\right].
	$$
\end{proof}

\begin{lemma}
    \label{lem:varS2}
    For all
	$\beta,\beta'\in\mathcal{W}_2^m(\mathcal{I})$
	with $f = \iprod{\beta}{}$ and $f'=\iprod{\beta'}{}$,
	$
	\mathrm{Var}[{ V_q(yf) - V_q(yf') }] \le \xv[(f-f')^2].
	$
	Furthermore, under Conditions (C1) and (C2), 
	$$
	\xv{[ (V_q(yf) - V_q(yf'))^2 ]} \le \left(
	{RAM} + \frac{1}{\eta_0}
	\right) L(f,f^*)
	$$ 
	for all $\beta,\beta'\in B(R)$ and all $R\in\mathcal{R}$
	where 
	$f^*\in\argmin_{g\in\mathfrak{G}}\xv[V_q(g)]$ and
	$L(f,f^*) = \xv{[ V_q(yf)-V_q(yf^*) ]}$  is the 
	associated risk.
\end{lemma}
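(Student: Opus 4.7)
The plan is to mirror the two-step template of Lemma~\ref{lem:varS1}, modifying it so that the squared loss difference appears on the left-hand side instead of the squared function difference.

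First, I would establish the variance bound and the basic second-moment estimate. Because $V_q$ is differentiable on $\mathbb{R}$ with $|V_q'(u)|\le 1$, it is $1$-Lipschitz; hence for any $\beta,\beta'\in\mathcal{W}_2^m(\mathcal{I})$ and any $y\in\{-1,1\}$,
$$
|V_q(yf)-V_q(yf')|\;\le\;|yf-yf'|\;=\;|f-f'|.
$$
Squaring, taking expectations and using $\mathrm{Var}[Z]\le\mathrm{E}[Z^2]$ yields the first assertion $\mathrm{Var}[V_q(yf)-V_q(yf')]\le\mathrm{E}[(f-f')^2]$. The same Lipschitz estimate also supplies the crude bound $\mathrm{E}[(V_q(yf)-V_q(yf'))^2]\le \mathrm{E}[(f-f')^2]$, which will be sharpened in the next step.

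Second, for the bound in terms of the excess risk $L(f,f^*)$, I would perform a pointwise ratio analysis analogous to the one carried out in the proof of Lemma~\ref{lem:varS1}. The idea is to factor one power of the loss difference using Lipschitz-ness, keeping the other as the raw excess loss:
$$
(V_q(yf)-V_q(yf^*))^2\;\le\; |f-f^*|\cdot |V_q(yf)-V_q(yf^*)|.
$$
The first factor is controlled by (C2): for $\beta\in B(R)$, the Cauchy--Schwarz argument from the preliminary lemma gives $|f(x)|\le RAM$, and the DWD population minimizer is bounded because its first-order condition $\eta V_q'(f^*)=(1-\eta)V_q'(-f^*)$ combined with $\eta_1\le\eta(x)\le 1-\eta_1$ from (C1) yields $|f^*(x)|\le \tfrac{q}{q+1}((1-\eta_1)/\eta_1)^{1/(q+1)}$. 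After conditioning on $X=x$ and doing the usual $\eta$/$(1-\eta)$ decomposition into the two regimes of $V_q$ (linear for $u\le q/(q+1)$, polynomial for $u>q/(q+1)$), exactly as in the displayed ratio computation inside the proof of Lemma~\ref{lem:varS1}, the second factor's conditional expectation is bounded by $\eta_0^{-1}$ times $\mathrm{E}[V_q(yf)-V_q(yf^*)\mid X=x]$, where the constant $\eta_0^{-1}$ comes from the margin assumption $|\eta(x)-1/2-c_q/4|\ge\eta_0$ in (C1). Integrating over $x$ and combining gives the stated bound with additive constant $RAM+1/\eta_0$.

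The main technical obstacle is the second step, and in particular obtaining the tight constant $1/\eta_0$ rather than the looser $5/\eta_0$ that comes out if one simply invokes Lemma~\ref{lem:varS1} and the Lipschitz bound in sequence. The sharpening comes from the fact that here we only lose one Lipschitz factor instead of two, so the conditional denominator $\eta V_q(f)+(1-\eta)V_q(-f)+\eta(2-c_q)-2$ appearing in the proof of Lemma~\ref{lem:varS1} need only be matched by $|V_q(yf)-V_q(yf^*)|$ rather than by $(f-f^*)^2$. Tracking this improvement carefully across the two regimes of $V_q$, and checking that the first-order-condition bound for $f^*$ is absorbed into the $RAM$ term, is the place where the argument has to be done by hand rather than quoted from Lemma~\ref{lem:varS1}.
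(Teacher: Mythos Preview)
Your first step (the variance bound via the $1$-Lipschitz property of $V_q$) is correct and matches the paper. The second step, however, diverges from the paper and has a genuine gap. The paper does not factor through $|f-f^*|\cdot|V_q(yf)-V_q(yf^*)|$; instead it treats $f^*$ as the Bayes classifier taking values in $\{-1,1\}$, assumes without loss of generality $\eta>1/2$ so $f^*=1$, writes out the ratio
\[
\frac{\xv[(V_q(yf)-V_q(yf^*))^2\mid X=x]}{\xv[V_q(yf)-V_q(yf^*)\mid X=x]}
\]
explicitly using $V_q(1)=c_q$ and $V_q(-1)=2$, lower-bounds the denominator via $V_q(u)\ge(1-u)_+$, and does case-by-case algebra (substituting $g=-1-f$ for $f\le-1$) to extract the additive constant $RAM+1/\eta_0$ directly.

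Your route runs into two problems. First, the inequality you assert, $\xv[\,|V_q(yf)-V_q(yf^*)|\mid X=x]\le\eta_0^{-1}\xv[V_q(yf)-V_q(yf^*)\mid X=x]$, is itself a comparison inequality of the same nature and difficulty as the lemma; the margin condition in (C1) does not hand it to you, and because the conditional loss difference changes sign (e.g.\ $V_q(f)-V_q(1)$ and $V_q(-f)-V_q(-1)$ have opposite signs whenever $f\ne1$), the absolute value can be much larger than the signed expectation when $f$ is near $f^*$. Second, even if that bound held, combining a uniform bound $|f-f^*|\le C_1$ with it produces a \emph{multiplicative} constant $C_1/\eta_0$ in front of $L(f,f^*)$, not the additive $RAM+1/\eta_0$ the lemma states; there is no mechanism in your decomposition to ``absorb'' the $f^*$ bound into the $RAM$ term while leaving a free-standing $1/\eta_0$. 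Finally, note that the paper's proof uses $f^*=\pm1$ (the Bayes rule), not the smooth first-order-condition solution you describe, so your bound on $|f^*|$ via $\eta V_q'(f^*)=(1-\eta)V_q'(-f^*)$ is aimed at a different object than the one the paper works with.
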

\begin{proof}
	By choice of the metric $d(f,f_0)$, we have immediately
	that 
	$
	\xv{[ (V_q(yf) - V_q(yf'))^2 ]} 
	\le \xv[(V_q(f)-V_q(f_0))^2].
	$
	Next, we aim to bound
	$$
	\frac{
		\xv[(V_q(yf)-V_q(yf^*))^2|X=x]
	}{
		\xv[V_q(yf)-V_q(yf^*)|X=x]
	}	
	$$
	Recalling that $f^*(x) = 2\Indc{\eta(x)>0.5}-1$, we
	can assume that $\eta>0.5$ and that $f^*=1$.
	We also note that $V_q(f)\ge (1-f)_+$ for $q\in(0,\infty)$, 
	and that 
	$f = \int x\beta dt = \iprod{Kx}{\beta}$, so 
	$\sup_x f \le  RAM$ 
	for $\beta\in B(R)$.  Thus,
	\begin{align*}
	&\frac{
		\xv[(V_q(yf)-V_q(yf^*))^2|X=x]
	}{
		\xv[V_q(yf)-V_q(yf^*)|X=x]
	}\\	
	&~= 
	\frac{
		\eta[ V_q(f)^2 - 2V_q(f)V_q(1) + V_q(1)^2 ] +
		(1-\eta)[ V_q(-f)^2 - 2V_q(-f)V_q(-1) + V_q(-1)^2 ]
	}{
		\eta(V_q(f)-V_q(1)) +
		(1-\eta)(V_q(-f)-V_q(-1))
	}\\
    &~\le
    \frac{
    	\eta[ V_q(f)^2 - 2c_qV_q(f) + c_q^2 ] +
    	(1-\eta)[ V_q(-f)^2 - 4V_q(-f) + 4 ]
    }{
      \eta (1-f)_+ +
      (1-\eta)(1+f)_+
      +\eta(2-c_q)-2
    } 
	\end{align*}
    For $f \le -1$, then denote $g = -1-f$ so $f = -1-g$. 
    Also, note that $2\eta-1 \ge 2\eta_0$.
    Thus, if we choose $q$ such that 
    $c_q < 4 - 2/\eta_0$, then
    \begin{align*}
    &\frac{
    	\xv[(V_q(yf)-V_q(yf^*))^2|X=x]
    }{
    	\xv[V_q(yf)-V_q(yf^*)|X=x]
    }\\	
    &~\le
    \frac{
    	\eta[ (1-f)^2 - 2c_q(1-f) + c_q^2 ] +
    	(1-\eta)[ c_q^2 + 4 ]
    }{
    	\eta (1-f) +
    	\eta(2-c_q)-2
    } \\	
  &~\le
  \frac{
    \eta[ (2+g)^2 - 2c_q(2+g) ] +
    4(1-\eta) + c_q^2
  }{
    \eta (2+g) +
    \eta(2-c_q)-2
  } \\
  &~\le
  \frac{
  	\eta g^2 + (4-2c_q)\eta g +
  	4(1-c_q) + c_q^2
  }{
  	\eta g +
  	\eta(4-c_q)-2
  } \\
  &~\le
  g + 
  \frac{
  	(2 - \eta c_q) g +
  	4(1-c_q) + c_q^2
  }{
  	\eta g +
  	\eta(4-c_q)-2
  }\\ 
  &~\le RAM + \frac{2}{\eta} - c_q \le RAM + \frac{1}{\eta_0}.
    \end{align*}
\end{proof}

\begin{lemma}
	\label{lem:empS2}
	Under Conditions (C1) and (C2),
	let $\phi_R$ be a sequence of subroot functions 
	with $r_R^*$ being the solution to 
	$\phi_R(r) = r/C_R$.  Then, for all 
	$R\in\mathcal{R}$, $\beta_0\in B(R)$, corresponding 
	$f_0=\iprod{\beta_0}{}$, $r\ge r_R^*$, and
	$d^2(f,f_0)=\xv[(V_q(g)-V_q(g_0))^2]$
	$$
	\xv\left[
	\sup_{\substack{\beta\in B(R)\\d^2(f,f_0)\le r}}
	(P-P_n)(\ell(f)-\ell(f_0))
	\right] \le
	\frac{96RA}{\sqrt{n}}\xi\left(\frac{\sqrt{r}}{4RA}\right)
	+\frac{64R^3A^3M}{nr}\xi\left(\frac{\sqrt{r}}{4RA}\right)^2
	:=  \phi_R(r)
	$$
	with 
	$
	r^*
	\le 625C_R^{*2}(x^*)^2/M^2.
	$
\end{lemma}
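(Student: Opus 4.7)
The plan is to parallel the argument of Lemma~\ref{lem:empS1}, replacing spectral tail sums of the reproducing kernel by a sup-norm entropy integral. I would proceed in three stages: symmetrization combined with Lipschitz contraction, a two-term generic chaining bound on the resulting Rademacher average under sup-norm covers, and finally solving the fixed-point equation $\phi_R(r)=r/C_R$.

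In the first stage I would invoke the symmetrization inequality~\eqref{lem67Blan} from Lemma~6.7 of \cite{blanchard2008statistical}, together with the fact that $V_q$ is $1$-Lipschitz for every $q\in(0,\infty)$, to obtain
$$
E\sup_{\substack{\beta\in B(R)\\ d^2(f,f_0)\le r}}(P-P_n)(V_q(f)-V_q(f_0))\;\le\;4\,E\,\mathcal{R}_n\mathcal{G}_R(r),
$$
where $\mathcal{G}_R(r):=\{f-f_0:\beta\in B(R),\ d^2(f,f_0)\le r\}$. By (C2) and the Cauchy--Schwarz envelope bound used at the start of the appendix, every $g\in\mathcal{G}_R(r)$ satisfies $\norm{g}_\infty\le 2RAM$ and $\norm{g}_{L^2(P)}\le\sqrt{r}$. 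This two-sided control (sup-norm envelope together with $L^2(P)$ localization radius) is precisely what distinguishes setting (S2) from (S1) and is the input needed for the next stage.

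In the second stage I would exploit both bounds simultaneously via Talagrand-style generic chaining. Each Rademacher increment $\mathcal{R}_n(g-g')$ is sub-Gaussian with parameter $\norm{g-g'}_{L^2(P)}/\sqrt{n}$ and, by Bernstein's inequality, sub-exponential with scale $\norm{g-g'}_\infty/n$; chaining along a geometric sequence of sup-norm $\varepsilon$-nets of $B(R)$ then produces a two-term bound whose leading piece comes from the sub-Gaussian regime of the increments and whose higher-order piece comes from the sub-exponential regime. Using the rescaling $H_\infty(B(R),\delta)=H_\infty(B,\delta/(RA))$ and writing $\xi(u):=\int_0^u\sqrt{H_\infty(B,\varepsilon)}\,d\varepsilon$, careful tracking of the universal constants (including the factor $4$ from symmetrization) matches the prescribed coefficients $96$ and $64$ in $\phi_R(r)$. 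In the third stage I would solve $\phi_R(r)=r/C_R$ by substituting $u=\sqrt{r}/(4RA)$; for $r\ge r^*$ the leading sub-Gaussian term dominates, and the equation reduces, after absorbing $R$ and universal constants into $C_R^*$, to $\xi(u)=\sqrt{n}\,u^2/(AM)$, which is the defining relation for $x^*$. Back-substitution then gives $\sqrt{r^*}\lesssim 4RAx^*$, and rewriting the right-hand side in terms of $C_R^*/M$ while tracking the constants from the quadratic formula (the $25$ in $625=25^2$ arises as in stage three of Lemma~\ref{lem:empS1}) yields the claimed $r_R^*\le 625\,C_R^{*2}(x^*)^2/M^2$.

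The principal obstacle is the chaining step. Producing the two-term shape of $\phi_R$ with the precise constants $96$ and $64$ requires running generic chaining with carefully matched sub-Gaussian and sub-exponential increment controls, and verifying that the geometric-net radii telescope correctly into $\xi$; a cruder single-term Dudley entropy bound would miss the higher-order $1/(nr)$ term entirely, while a purely Bernstein-type bound would lose the optimal $1/\sqrt{n}$ scaling. Stages one and three are comparatively routine once the chaining bound is in hand, and the rest is bookkeeping analogous to what was carried out under (S1).
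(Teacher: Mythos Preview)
Your approach diverges from the paper's and contains a genuine gap at the localization step. The paper does \emph{not} symmetrize and then Lipschitz-contract to the linear class $\{f-f_0\}$. Instead it applies Lemma~6.10 of \cite{blanchard2008statistical} (a direct sup-norm entropy bound for $(P-P_n)$) to the loss class $\{V_q(f)-V_q(f_0)\}$ itself. The $1$-Lipschitz property is used only to transfer sup-norm covers: $|V_q(f)-V_q(f')|\le 2A\norm{\beta-\beta'}$ yields $H_\infty(\text{loss class},\veps)\le H_\infty(B(4AR),\veps)$. The variance input to Lemma~6.10 is then precisely $d^2(f,f_0)\le r$, since that metric is already $E[(V_q(f)-V_q(f_0))^2]$. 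The second term in $\phi_R$ comes from the $MH_\infty(\sigma)/n$ piece of Lemma~6.10, upper-bounded via $H_\infty(\sigma)\le \xi(\sigma)^2/\sigma^2$, not from a sub-exponential chaining regime.

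Your route breaks at the claim $\norm{g}_{L^2(P)}\le\sqrt r$ for $g=f-f_0$. In setting~(S2) the constraint $d^2(f,f_0)\le r$ means $E[(V_q(f)-V_q(f_0))^2]\le r$, and the $1$-Lipschitz inequality $(V_q(f)-V_q(f_0))^2\le (f-f_0)^2$ goes the \emph{wrong} direction: it gives $E[(V_q(f)-V_q(f_0))^2]\le E[(f-f_0)^2]$, not an upper bound on $E[(f-f_0)^2]$. After contraction you therefore have no usable $L^2(P)$ localization on $\mathcal{G}_R(r)$, and the chaining you sketch cannot be initiated at radius $\sqrt r$. Staying with the loss class, as the paper does, is what makes the variance constraint exploitable. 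For the fixed point, the paper also proceeds differently from your stage three: rather than solving a quadratic, it guesses $t_R^*=c^2C_R^{*2}(x^*)^2/M^2$, uses that $\xi(x)/x$ is decreasing to bound $\xi(\sqrt{t_R^*}/(4RA))$, and verifies $\phi_R(t_R^*)\le t_R^*/C_R$ once $c>25A$, whence $r_R^*\le t_R^*$.
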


\begin{proof}
	Lemma~6.10 from \cite{blanchard2008statistical} states that 
	for $\mathcal{G}$ a separable class of real functions
	in sup-norm such that $\norm{g}_\infty\le M$ and
	$\xval{g^2}\le\sigma^2$, the following bound holds:
	\begin{equation}
	\label{lem610blan}
	\xv\left[
	\sup_{g\in\mathcal{G}} \abs{ (P-P_n)g }
	\right]  \le
	\frac{24}{\sqrt{n}}\int_{0}^{\sigma}\sqrt{ H_\infty(\veps) }d\veps
	+\frac{MH_\infty(\sigma)}{n}
	\end{equation}
	where $H_\infty(\veps)$ is the supremum norm $\veps$-entropy 
	for $\mathcal{G}$.
\\
	
	We first note that 
	$$
	\{
	V_q(f)-V_q(f_0)\,:\,\beta\in B(R),\,d^2(f,f_0)\le r 
	\} \subset 
	\{
	V_q(f)\,:\,\beta\in B(2R),\,\xv[V_q(f)^2]\le r
	\}
	$$
	and secondly note that the DWD loss function is 
	1-Lipschitz, which implies that 
	$$
	\abs{ V_q(f)-V_q(f') } \le
	\abs*{
		\int_I x(t)\beta(t)dt -
		\int_I x'(t)\beta'(t)dt 
	} 
	\le \norm{\beta-\beta'}\norm{x-x'}
	\le 2A\norm{\beta-\beta'}
	$$
	Thus, we can bound 
	$$
	H_\infty(\{
	V_q(f)\,:\,\beta\in B(2R),\,\xv[V_q(f)^2]\le r
	\},\veps) \le
	H_\infty(B(4AR),\veps).
	$$
	and application of Equation~\ref{lem610blan} gives
	\begin{align*}
	\xv&\left[
	\sup_{\substack{\beta\in B(R)\\d^2(f,f_0)\le r}}
	(P-P_n)(V_q(f)-V_q(f_0))
	\right] \\
	&\le
	\frac{24}{\sqrt{n}}\int_{0}^{\sqrt{r}}
	\sqrt{ H_\infty(B(4AR),\veps) }d\veps
	+\frac{4RAMH_\infty(B(4AR),\sqrt{r})}{n}\\
	&\le
	\frac{96RA}{\sqrt{n}}\int_{0}^{\sqrt{r}/4RA}
	\sqrt{ H_\infty(B(1),\veps) }d\veps
	+\frac{4RAMH_\infty(B(1),\sqrt{r}/2R)}{n}\\
	&\le
	\frac{96RA}{\sqrt{n}}\xi\left(\frac{\sqrt{r}}{4RA}\right)
	+\frac{64R^3A^3M}{nr}\xi\left(\frac{\sqrt{r}}{4RA}\right)^2
	:=  \phi_R(r)
	\end{align*}
	where
	$
	\xi(x) = \int_{0}^{x}
	\sqrt{ H_\infty(B(1),\veps) }d\veps
	$.
	This final bound is a sub-root function in terms of $r$.
\\
	
	For $x^*$, the solution to $\xi(x)=\sqrt{n}x^2/AM$, we can bound 
	$r_R^*$, the solution to $\phi_R(r)=r/C_R$ with $C_R>RAM$, as follows.
	First, we note that $\xi(x)/x$ is decreasing.  We also choose a $c>0$
	so that $t_R^* := c^2C^{*2}_R(x^*)^2/A^2M^2$.  Therefore,
	$$
	\xi\left(\frac{\sqrt{t_R^*}}{4RA}\right)
	\le
	\frac{cC_R}{4RAM}\xi(x^*)
	= 
	\frac{cC_R\sqrt{n}(x^*)^2}{4RA^2M^2}
	=
	\frac{\sqrt{n}t_R^*}{4RcC_R}.
	$$
	Thus,
	$$
	\phi_R(t^*_R) \le
	\frac{96RA}{\sqrt{n}}\frac{\sqrt{n}t_R^*}{4RcC_R}
	+\frac{64R^3A^3M}{nt_R^*}\left[\frac{\sqrt{n}t_R^*}{4RcC_R}\right]^2
	= 
	\left[\frac{24A}{c}
	+\frac{4RA^3M}{c^2C_R}\right]\frac{t_R^*}{C_R^*}
	\le 
	\left[\frac{24A}{c}
	+\frac{4A^2}{c^2}\right]\frac{t_R^*}{C_R^*}.
	$$
	Therefore, taking $c>25A$ results in $\phi_R(t_R^*) \le t^*/C_R^*$
	implying that $t_R^*\ge r_R^*$ giving finally that 
	$
	r_R^* \le 625C_R^{*2}(x^*)^2/M^2.
	$
\end{proof}

\begin{proof}[Proof of Theorem~\ref{thm:empRisk}]
	Given the above lemmas, we apply the ``model selection'' 
	Theorem~4.3 from \cite{blanchard2008statistical}, which 
	is a generalization	of Theorem~4.2 of \cite{massart2000some}, 
	to our functional DWD classifier.
	\\
	
	First, we choose $\mathcal{R} = \{
	M^{-1}A^{-1}2^k\,|\, k\in\natural,k\le\lceil\log_2n\rceil
	\}$ to be our countable set of radii for the balls 
	$B(R)$ with $R\in\mathcal{R}$.  To apply Theorem~4.3,
	we require a sequence $z_R$ and choose 
	$z_R = \log( \log_2n+2 )$ similarly
	to \cite{blanchard2008statistical}.  We also require a 
	penalty function that satifies
	$$
	\mathrm{pen}(R) \ge 250K\frac{r^*}{C_R} + 
	\frac{(KC_R+28b_R)(z_R+\xi+\log2)}{3n}
	$$
	where $b_R = 1+RAM$, 
	$C_R = \frac{RAM}{\eta_1}+\frac{5}{\eta_0}$, 
	and $r_R^* = \frac{8C_R^2}{\sqrt{n}}\inf_{d\in\natural}\left[
	\frac{2d}{\sqrt{n}} + \frac{A\eta_1}{M}\sqrt{\sum_{j>d}\lmb_j} 
	\right]$ under setting (C1), and 
	$b_R = 1+RAM$, 
	$C_R = {RAM}+\frac{1}{\eta_0}$, 
	and $r_R^* = 625C_R^{*2}(x^*)^2/M^2$
	under setting (C2).
	To achieve that, we take 
	$\gamma(n)$ to be $\frac{1}{\sqrt{n}}\inf_{d\in\natural}\left[
	\frac{2d}{\sqrt{n}} + \frac{A\eta_1}{M}\sqrt{\sum_{j>d}\lmb_j} 
	\right]$ under setting (C1) and $(x^*)^2/M^2$ under setting (C2),
	and we take $
	\lambda_n = C\left( \gamma(n) + 
	\frac{\log(\delta^{-1}\log n)\vee 1}{n} \right)
	$
	for some universal constant $C$.
	Consequently, we can choose
	$
	\mathrm{pen}(R)  = \lmb_n\left(
	RAM + k_0
	\right)
	$
	for another suitable positive constant $k_0$.
	\\
	
	Therefore, given the estimator $\hat{\beta}$ and corresponding
	estimator $\hat{f}$, 
	we have 
	$
	P_n V_q(\hat{f}) + \lmb_n AMJ(\hat{\beta}) \le
	P_n V_q(0) + \lmb_n AMJ(0) = 1.
	$
	Therefore, the regularization term $\lmb_nAMJ(\hat{\beta})\le1$
	and consequently, $\lmb_n\ge n^{-1}$.  
	Thus, $J(\hat{\beta})\le nA^{-1}M^{-1}$.
	Recalling that 
	$B(R) = \{ \beta\in\mathcal{W}_2^m(I)\,:\, J({\beta})\le R\}$,
	we have that $\hat{\beta}\in B(\hat{R})$ for 
	$\hat{R} = J(\hat{\beta})\vee(A^{-1}M^{-1})$ and,
	updating $k_0$ as necessary, we have that
	\begin{align*}
	P_n V_q(\hat{f}) + \mathrm{pen}(\hat{R})
	&\le
	P_n V_q(\hat{f}) +
	\lmb_n(
	{AMJ(\hat{\beta})\vee1}) + k_0
	)\\
	&\le 
	P_n V_q(\hat{f}) +
	\lmb_n(
	J(\hat{\beta})AM + k_0)\\
	&\le 
	\inf_{\beta\in\mathcal{W}_2^m}\left\{
	P_n V_q({f}_\beta) +
	\lmb_nJ({\beta})AM
	+
	\lmb_n k_0
	\right\}\\
	&= 
	\inf_{R>0}
	\inf_{\beta\in B(R)}\left\{
	P_n V_q({f}_\beta) +
	\lmb_n( RAM
	+ k_0)
	\right\}\\
	&\le
	\inf_{R\in\mathcal{R}}
	\inf_{\beta\in B(R)}\left\{
	P_n V_q({f}_\beta) +
	\lmb_n( RAM
	+ k_0)
	\right\}
	\end{align*}
	where the third inequality results from $\hat{f}$ being the
	minimizer of the regularized estimation proceedure.
	Application of the ``model selection theorem''
	\citep{massart2000some,blanchard2008statistical}
	gives that 
	with probability $1-\delta$ that 
	\begin{align*}
	L(\hat{f},f^*) &\le
	2\inf_{R\in\mathcal{R}}\inf_{\beta\in B(R)}\left\{
	L(f_\beta,f^*) + 2\lmb_n ( RAM
	+ k_0 )
	\right\}\\
	&\le
	2\inf_{A^{-1}M^{-1}\le R\le nA^{-1}M^{-1}}\inf_{\beta\in B(R)}\left\{
	L(f_\beta,f^*) + 2\lmb_n ( 2^{\lceil \log_2RAM\rceil}
	+ k_0 )
	\right\}\\
	&\le
	2\inf_{R\le nA^{-1}M^{-1}}\inf_{\beta\in B(R)}\left\{
	L(f_\beta,f^*) + 2\lmb_n ( 2^{\lceil (\log_2RAM)_+\rceil}
	+ k_0 )
	\right\}\\
	&\le
	2\inf_{R\le nA^{-1}M^{-1}}\inf_{\beta\in B(R)}\left\{
	L(f_\beta,f^*) + 2\lmb_n ( RAM\vee 1)
	+ k_0 )
	\right\}\\
	&\le
	2\inf_{\beta\in \mathcal{W}_2^m}\left\{
	L(f_\beta,f^*) + 2\lmb_n AMJ(\beta)
	+ \lmb_nk_1 + k_0 )
	\right\}
	\end{align*}
\end{proof}

\end{document}